\def\BibTeX{{\rm B\kern-.05em{\sc i\kern-.025em b}\kern-.08em
    T\kern-.1667em\lower.7ex\hbox{E}\kern-.125emX}}
\newcommand\sysname{RANUM\xspace}
\newcommand\oldsysname{DEBAR\xspace}
\newcommand\NaN{\texttt{NaN}\xspace}
\newcommand\INF{\texttt{INF}\xspace}
\newcommand\fin{f^{\mathsf{in}}}
\newcommand\fout{f^{\mathsf{out}}}
\newcommand\gXvalid{\gX_{\mathsf{valid}}}
\newcommand\gWvalid{\gW_{\mathsf{valid}}}
\newcommand\tsys{t_{\mathsf{sys}}}
\newcommand\tunit{t_{\mathsf{unit}}}
\newcommand\vxtrain{\vx_{\mathsf{train}}}
\newcommand\vxinfer{\vx_{\mathsf{infer}}}
\newcommand\vwinfer{\vw_{\mathsf{infer}}}
\newcommand\vdwtarg{\Delta\vw_{\mathsf{targ}}}
\newcommand\valid{\mathsf{valid}}
\newcommand\precondgen{\mathsf{precond}}
\newcommand\Vfix{\gV_{\mathsf{fix}}}
\newcommand\sgn{\mathrm{sgn}}
\newcommand\minstep{\mathsf{minstep}}
\newcommand\maxiter{\mathsf{maxiter}}
\newcommand{\CodeIn}[1]{{\small\texttt{#1}}}
\newcommand{\SmallCodeIn}[1]{{\scriptsize\texttt{#1}}}
\definecolor{darkgreen}{rgb}{0,0.5,0}
\definecolor{darkblue}{rgb}{0,0,0.5}
\definecolor{purple}{rgb}{1,0,1}
\definecolor{gray}{rgb}{0.5,0.5,0.5}
\definecolor{darkred}{RGB}{200,58,58}
\newcommand{\kibitz}[2]{\ifnum\Comments=0{\color{#1}{#2}}\fi}
\else\excludeversion{old}\fi
\newenvironment{new}{\color{blue}}{}\else\fi
\crefname{fact}{Fact}{Facts}
\renewcommand\paragraph[1]{\textbf{#1}.\,}
\newcommand\circled[1]{\raisebox{0pt}{\textcircled{\raisebox{-0.9pt}{#1}}}}
\Crefname{subappendix}{Suppl.}{Suppl.}
\definecolor{revcolor}{HTML}{0a46f4}  
\begin{document}




\title{Reliability Assurance for Deep Neural Network Architectures Against Numerical Defects}



\author{
    \IEEEauthorblockN{Linyi Li\IEEEauthorrefmark{1} \quad
    Yuhao Zhang\IEEEauthorrefmark{2} \quad
    Luyao Ren\IEEEauthorrefmark{3}\IEEEauthorrefmark{4} \quad
    Yingfei Xiong\IEEEauthorrefmark{3}\IEEEauthorrefmark{4} \quad
    Tao Xie\IEEEauthorrefmark{3}\IEEEauthorrefmark{4}\IEEEauthorrefmark{5}\thanks{\IEEEauthorrefmark{5}Corresponding author.}}
    \IEEEauthorblockA{\IEEEauthorrefmark{1}Department of Computer Science, University of Illinois Urbana-Champaign, 
    \href{mailto:linyi2@illinois.edu}{\textit{linyi2@illinois.edu}}}
    \IEEEauthorblockA{\IEEEauthorrefmark{2}Department of Computer Sciences, University of Wisconsin-Madison, 
    \href{mailto:yuhao.zhang@wisc.edu}{\textit{yuhao.zhang@wisc.edu}}}
    \IEEEauthorblockA{\IEEEauthorrefmark{3}School of Computer Science, Peking University, \href{mailto:rly@pku.edu.cn}{\textit{rly@pku.edu.cn}},
    \href{mailto:xiongyf@pku.edu.cn}{\textit{xiongyf@pku.edu.cn}},
    \href{mailto:taoxie@pku.edu.cn}{\textit{taoxie@pku.edu.cn}}}
    \IEEEauthorblockA{\IEEEauthorrefmark{4}Key Laboratory of High Confidence Software Technologies, Ministry of Education (Peking University)}
}


\maketitle

\ifnum\arxiv=1
\thispagestyle{plain}
\pagestyle{plain}
\fi


\begin{abstract}
With the widespread deployment of deep neural networks (DNNs), ensuring the reliability of DNN-based systems is of great importance. Serious reliability issues such as system failures can be caused by numerical defects, one of the most frequent defects in DNNs. 
To assure high reliability against numerical defects, 
in this paper, we propose the \sysname approach including novel techniques for three reliability assurance tasks: detection of potential numerical defects,  confirmation of potential-defect feasibility, and suggestion of defect fixes.
To the best of our knowledge, \sysname is the first approach that confirms potential-defect feasibility with failure-exhibiting tests and suggests fixes automatically.
Extensive experiments on the benchmarks of 63 real-world DNN architectures show that \sysname outperforms state-of-the-art approaches  across the three reliability assurance tasks. 
In addition, when the \sysname{}-generated fixes are compared with developers' fixes on open-source projects, in 37 out of 40 cases, \sysname{}-generated  fixes are equivalent to or even better than human fixes.  
\end{abstract}

\begin{IEEEkeywords}
neural network, numerical defect, testing, fix
\end{IEEEkeywords}

%

\section{Introduction}
    \label{sec:intro}
    Deep Neural Networks~(DNNs) are successfully deployed and show remarkable performance in many challenging applications, including facial recognition~\cite{liu2015deep,zhang2016joint}, game playing~\cite{silver2017mastering}, and code completion~\cite{liu2022unified,bui2021infercode}. To develop and deploy DNNs, one needs to attain a DNN architecture, which is usually encoded by program code as the example shown in \Cref{fig:dnn-code-example}.
    First, for training, the user executes the program with the architecture on the given  training/validation data, attains the model weights, and stores them in a weight file.
    The architecture along with the weights is named a model.
    Then, for inference, the user loads the weight file to CPU/GPU memory or AI chips, executes the same program with the given inference sample and weights as arguments, and gets the model prediction result as the program output.
    With the wide deployment of DNN models (resulted from training DNN architectures), reliability issues of DNN-based systems have become a serious concern, where malfunctioning DNN-based systems have led to serious consequences such as fatal traffic accidents~\cite{drivingfatal}.
    
    To assure the reliability of DNN-based systems, it is highly critical to detect and fix numerical defects
    for two main reasons. 
    First, numerical defects widely exist in DNN-based systems. 
    For example, in the DeepStability database~\cite{kloberdanz2022deepstability}, over 250 defects are identified in deep learning~(DL) algorithms where over 60\% of them are numerical defects. 
    Moreover, since numerical defects exist at the architecture level, any model using the architecture naturally inherits these defects. 
    Second, numerical defects can result in serious consequences. Once numerical defects~(such as divide-by-zero) are exposed, the faulty DNN model will output \NaN or \INF instead of producing any meaningful prediction, resulting in numerical failures and system crashes~\cite{zhang2018empirical,zhang2019empirical}. Thus, numerical defects  hinder the application of DNNs in scenarios with high reliability and availability requirements such as threat monitoring in cybersecurity~\cite{problemaisecurity} and cloud system  controlling~\cite{sharma2016machine,jay2019deep}.
    
    \begin{figure}[t]
        \includegraphics[width=\linewidth]{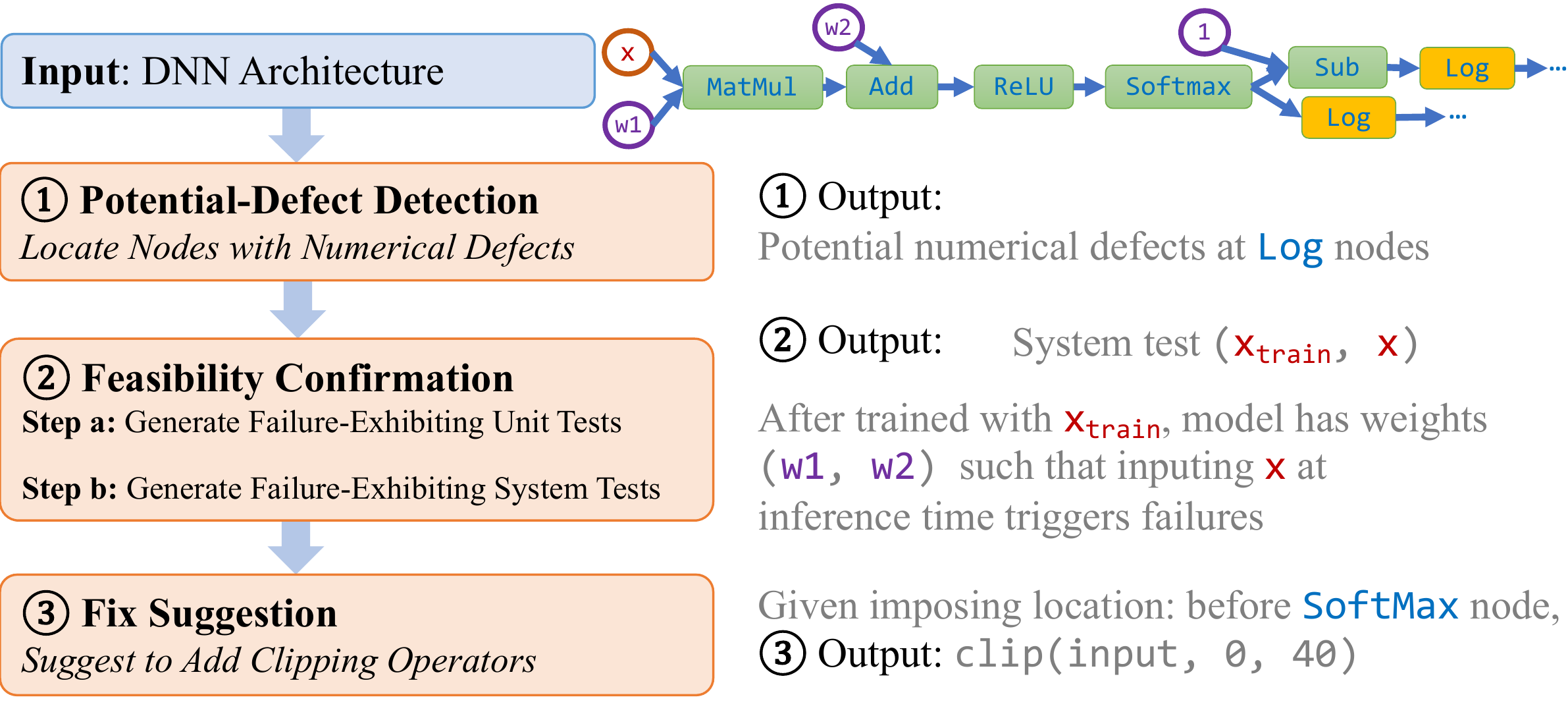}
        \vspace{-2.25em}
        \caption{Workflow for reliability assurance against numerical defects in DNN architectures.
        The left-hand side shows three tasks and the right-hand side shows corresponding examples.
        \sysname supports all the three tasks, and is the first automatic approach for system test generation and fix suggestion.}
        \label{fig:workflow}
        \vspace{-0.75em}
    \end{figure}
    
    To address numerical defects in DNN architectures in an actionable manner~\cite{SCIS22}, in this paper, we propose a workflow of reliability assurance (as shown in \Cref{fig:workflow}), consisting of three tasks: 
    potential-defect detection, feasibility confirmation, and fix suggestion, along with our proposed approach to support all these three tasks.

    \emph{Potential-Defect Detection.}
    In this task, we detect all potential numerical defects in a DNN architecture, with a focus on  operators with numerical defects (in short as defective operators) that potentially exhibit inference-phase numerical failures for two main reasons, 
    following the literature~\cite{zhang2020detecting,yan2021exposing}. First, these defective operators can be exposed after the model is deployed and thus are more devastating than those that potentially exhibit training-phase numerical failures~\cite{odena2019tensorfuzz,zhang2020detecting}. Second, a defective operator  that potentially exhibits training-phase numerical failures can usually be triggered to exhibit inference-phase numerical failures, thus also being detected by our task. 
    For example, the type of training-phase \NaN gradient failures is caused by an operator's input that leads to invalid derivatives, and this input also triggers failures in the inference phase~\cite{yan2021exposing}.
    
    \emph{Feasibility Confirmation}.
    In this task, we confirm the feasibility of these potential numerical defects by generating failure-exhibiting system tests.
    As shown in \Cref{fig:workflow}, a system test is a tuple of training example\footnote{In real settings, multiple training examples are used to train an architecture, but generating a single training example to exhibit failures (targeted by our work) is desirable for ease of debugging while being more challenging than generating multiple training examples to exhibit failures.} $\vxtrain$ and inference example $\vx$ such that after the training example is used to train the architecture under consideration, applying the resulting model on the inference example exhibits a numerical failure.

    \emph{Fix Suggestion.}
     In this task, we fix a feasible numerical defect. 
    To determine the fix form,
    we have inspected the developers' fixes of the numerical defects collected by \citeauthor{zhang2018empirical}~\cite{zhang2018empirical} by looking at follow-up Stack Overflow posts or GitHub commits.
    Among the 13 numerical defects whose fixes can be located,
    12 fixes can be viewed as explicitly or implicitly imposing interval preconditions on different locations, such as after inputs or weights are loaded and before defective operators are invoked.
    Thus, imposing an interval precondition, e.g., by clipping (i.e., chopping off the input parts that exceed the specified input range) the input for defective operator(s), is an effective and common strategy for fixing a numerical defect.
    Given a location~(i.e., one related to an  operator, input, or weight where users prefer to impose a fix), we suggest a fix for the  numerical defect under consideration.

    To support all the three tasks of the \underline{\textbf{r}}eliability \underline{\textbf{a}}ssurance process against DNN \underline{\textbf{num}}erical defects, we propose the \textbf{\sysname} approach in this paper.
    
    \emph{For task \circled{1} and task \circled{2}a, which are already supported by two existing tools (DEBAR~\cite{zhang2020detecting} and GRIST~\cite{yan2021exposing}), \sysname introduces novel extensions and optimizations that substantially improve the effectiveness and efficiency}. 
    (1)~DEBAR~\cite{zhang2020detecting} is the state-of-the-art tool for potential-defect detection; 
    however, DEBAR can handle only static computational graphs and does not support widely used dynamic graphs in PyTorch programs~\cite{paszke2019pytorch}.
    \sysname supports dynamic graphs thanks to our novel technique of \underline{\emph{backward fine-grained node labeling}}.
    (2)~GRIST~\cite{yan2021exposing} is the state-of-the-art  tool for generating failure-exhibiting unit tests to confirm potential-defect feasibility; however, GRIST conducts  gradient back-propagation by using the original inference input and weights as the starting point.
    Recent studies~\cite{madry2018towards,goodfellow2015explaining} on DNN adversarial attacks suggest that using a randomized input as the starting point leads to stronger attacks than using the original input.
    Taking this observation, we combine gradient back-propagation with random initialization in \sysname.
    
    \emph{For task \circled{2} and task \circled{3}, which are not supported by any existing tool, \sysname is the first automatic approach for them}. 
    
    For \textbf{feasibility confirmation}, \sysname is  the \textbf{first} approach that generates failure-exhibiting \textbf{system} tests that contain training examples. Doing so is a major step further from the existing GRIST tool, which generates failure-exhibiting unit tests ignoring the practicality of generated model weights.
    Given that in practice model weights are determined by training examples, we propose the technique of \underline{\emph{two-step generation}} for this task. 
    First, we generate a failure-exhibiting unit test. 
    Second, we generate a training example that leads to the model weights in the unit test when used for training.
    For the second step, we extend the deep-leakage-from-gradient~(DLG) attack~\cite{zhu2020deep} by incorporating the straight-through gradient estimator~\cite{bengio2013estimating}. 
    
    For \textbf{fix suggestion}, \sysname is the \textbf{first} automatic approach.
    \sysname is based on the novel technique of \underline{\emph{abstraction}} \underline{\emph{optimization}}. 
    We observe that a defect fix in practice is typically imposing interval clipping on some operators such that each later-executed operator (including those defective ones) can never exhibit numerical failures.
    Therefore, 
    we propose the novel technique of abstraction optimization to ``deviate away'' the input range of a defective operator from the invalid range, falling in which can cause numerical failures. 

For \sysname, we implement a tool\footnote{Open source at \texttt{\url{https://github.com/llylly/RANUM}}.} and evaluate it on the   benchmarks~\cite{yan2021exposing} of 63 real-world DNN architectures containing 79 real numerical defects; these benchmarks are the largest benchmarks of DNN numerical defects to the best of our knowledge.
    The evaluation results show that \sysname is both effective and efficient in all the three tasks for DNN reliability assurance.  
(1)~For potential-defect detection, \sysname detects $>$60\% more true defects than the state-of-the-art DEBAR approach. 
(2)~For feasibility confirmation, \sysname generates  failure-exhibiting unit tests
to confirm  potential numerical defects in the benchmarks with 100\% success rate; in contrast, with the much higher time cost (17.32X), the state-of-the-art GRIST approach generates unit tests to confirm defects with 96.96\% success rate. More importantly, for the first time, \sysname generates failure-exhibiting system tests that confirm defects (with 92.78\% success rate). 
(3)~For fix suggestion, \sysname proposes fix suggestions for  numerical defects with 100\% success rate. 
In addition, when the \sysname{}-generated fixes are compared with developers' fixes on open-source projects, in 37 out of 40 cases, \sysname{}-generated  fixes are equivalent to or even better than human fixes.

    This paper makes the following main contributions: 
    \begin{itemize}[leftmargin=*]
        \item 
        We formulate the reliability assurance problem for DNN architectures against numerical defects and elaborate on three important tasks for this problem.
    
        \item 
        We propose \sysname---the first automatic approach that solves all these three tasks.
        \sysname includes three novel techniques~(backward fine-grained node labeling, two-step test generation, and abstraction optimization) and solves system test generation and fix suggestion for the first time.
        
        \item We implement  \sysname and apply it on 63 real-world DNN architectures, showing the high effectiveness and efficiency of \sysname  compared to both the state-of-the-art approaches and developers' fixes.
    \end{itemize}

    \vspace{-0.25em}
\section{Background and Approach Overview}
    \label{sec:background-task}

    In this section, we introduce the background of DNN numerical defects and failures, and then give an overview of the \sysname approach with a running example. 

    \subsection{Background}
        \label{subsec:background}
    
    \begin{figure}[t]
        \centering
        \includegraphics[width=0.95\linewidth]{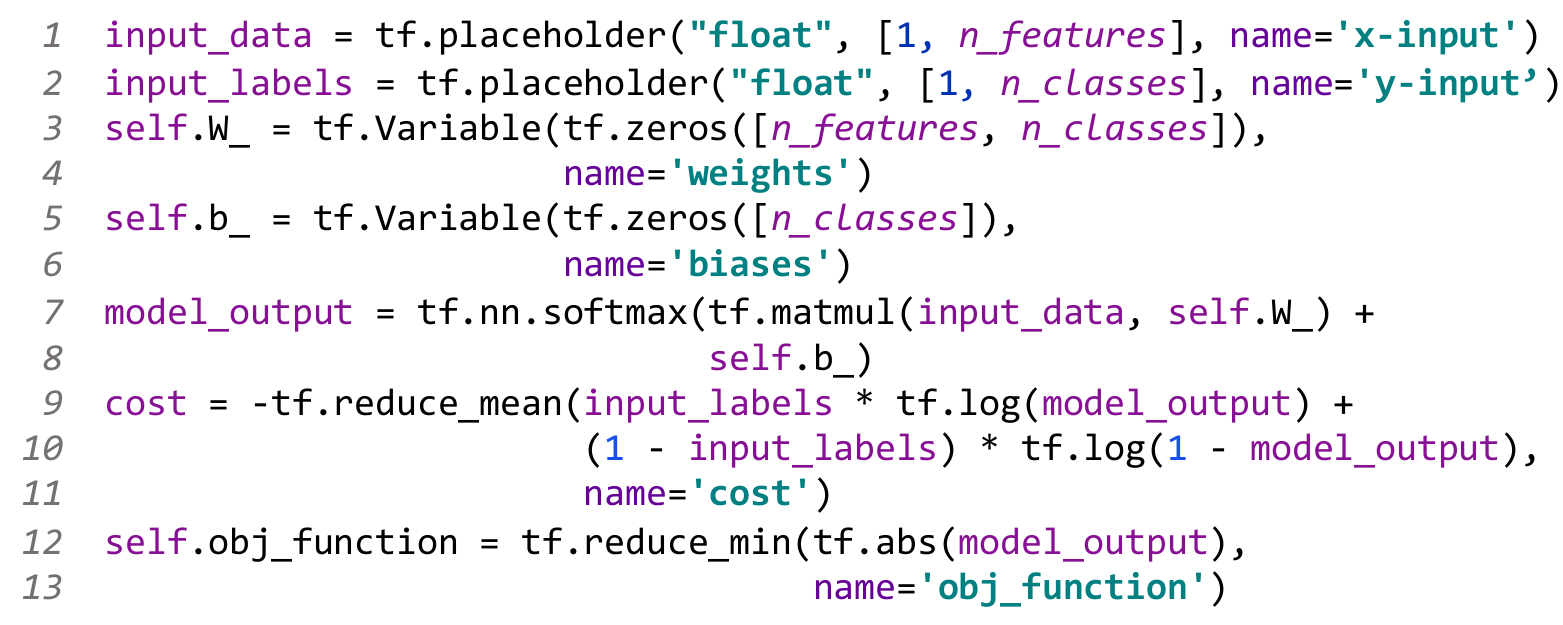}
        \vspace{-1.2em}
        \caption{A DL program snippet that defines a linear regression model from benchmarks of real-world numerical defects~(Case 2a in \cite{yan2021exposing}).}
        \label{fig:dnn-code-example}
        \vspace{-1.0em}
    \end{figure}

        DL developers define the DNN architecture with code using modern DL libraries such as PyTorch~\cite{paszke2019pytorch} and TensorFlow~\cite{tensorflow2015-whitepaper}.
        The DNN architecture can be expressed by a computational graph.
        \Cref{fig:dnn-code-example,fig:dnn-graph-example} depict a real-world example.
        Specifically, the DNN architecture  in a DL program can be automatically converted to an ONNX-format computational graph~\cite{onnx}.

        The computational graph can be viewed as a Directed Acyclic Graph~(DAG): $\gG = \langle \gV, \gE \rangle$, where $\gV$ and $\gE$ are sets of nodes and edges, respectively.
        We call nodes with zero in-degree as \emph{initial nodes}, which correspond to input, weight, or constant nodes.
        Initial nodes provide concrete data for the DNN models resulted from training the DNN architecture.
        The data from each node is formatted as a tensor, i.e., a multidimensional array, with a specified data type and array shape annotated alongside the node definition.
        We call nodes with positive in-degree as \emph{internal nodes}, which correspond to concrete operators, such as matrix multiplication~(\CodeIn{MatMul}) and addition~(\CodeIn{Add}).
        During model training, the model weights, i.e., data from weight nodes, are generated by the training algorithm.
        Then, in the deployment phase~(i.e., model inference), with these trained weights and a user-specified input named inference example,
        the output of each operator is computed in the topological order. 
        The output of some specific node is used as the prediction result.
        
        We let $\vx$ and $\vw$ denote the concatenation of data from all input nodes and data from all weight nodes,  respectively.\footnote{A bolded alphabet stands for a vector or tensor  throughout the paper.}
        For example, in \Cref{fig:dnn-graph-example}, $\vx$ concatenates data from nodes 1 and 11; and $\vw$ concatenates data from nodes 2 and 4.
        Given specific $\vx$ and $\vw$, the input and output for each node are deterministic.\footnote{An  architecture may contain stochastic nodes. We view these nodes as nodes with randomly sampled data, so the architecture itself is deterministic.}
        We use $\fin_n(\vx; \vw)$ and $\fout_n(\vx; \vw)$ to express input and output data of node $n$, respectively, given $\vx$ and $\vw$.
        
        \noindent
        \textbf{Numerical Defects in DNN Architecture.}
        We focus on inference-phase numerical defects.
        These defects lead to numerical failures when specific operators receive inputs within invalid ranges so that the operators output \NaN or \INF.

        \begin{definition}
            \label{def:numerical-defect}
            For the given computational graph  $\gG=\langle \gV, \gE\rangle$,
            if there is a node $n_0 \in \gV$, such that there exists a valid input and valid weights that can let the input of node $n_0$ fall within the invalid range, we say there is a \textbf{\emph{numerical defect}} at node~$n_0$.
            \vspace{-0.5em}
            \begin{equation*}
                \label{eq:numerical-defect}
                \begin{aligned}
                    \text{Formally,}\, & \exists \vx_0 \in \gXvalid, \vw_0 \in \gWvalid, 
                    \fin_{n_0}(\vx_0; \vw_0) \in \gI_{n_0,\mathsf{invalid}} \\
                    \Longrightarrow &
                    \text{$\exists$ numerical defect at node $n_0$}.
                \end{aligned}
            \end{equation*}
        \end{definition}
        In the definition, $\gXvalid$ and $\gWvalid$ are valid input range and weight range, respectively, which are clear given the deployed scenario.
        For example, ImageNet \CodeIn{Resnet50} models have valid input range $\gXvalid = [0,1]^{3\times 224\times 224}$ since image pixel intensities are within $[0,1]$, and valid weight range $\gWvalid = [-1,1]^{p}$ where $p$ is the number of parameters since weights of well-trained \CodeIn{Resnet50} models are typically within $[-1,1]$.
        The invalid range $\gI_{n_0,\mathsf{invalid}}$ is determined by $n_0$'s operator type with detailed definitions in \Cref{adxsec:list-numerical-defects}.
        For example, for the  \CodeIn{Log} operator, 
        the invalid range $\gI_{n_0,\mathsf{invalid}} = (-\infty, U_{\min})$ where $U_{\min}$ is the smallest positive number of a tensor's data type. 

        \vspace{-0.5em}
    \subsection{Approach Overview}
        \label{subsec:approach-overview}
        
        \begin{figure}[!t]
            \centering
            \includegraphics[width=0.9\linewidth]{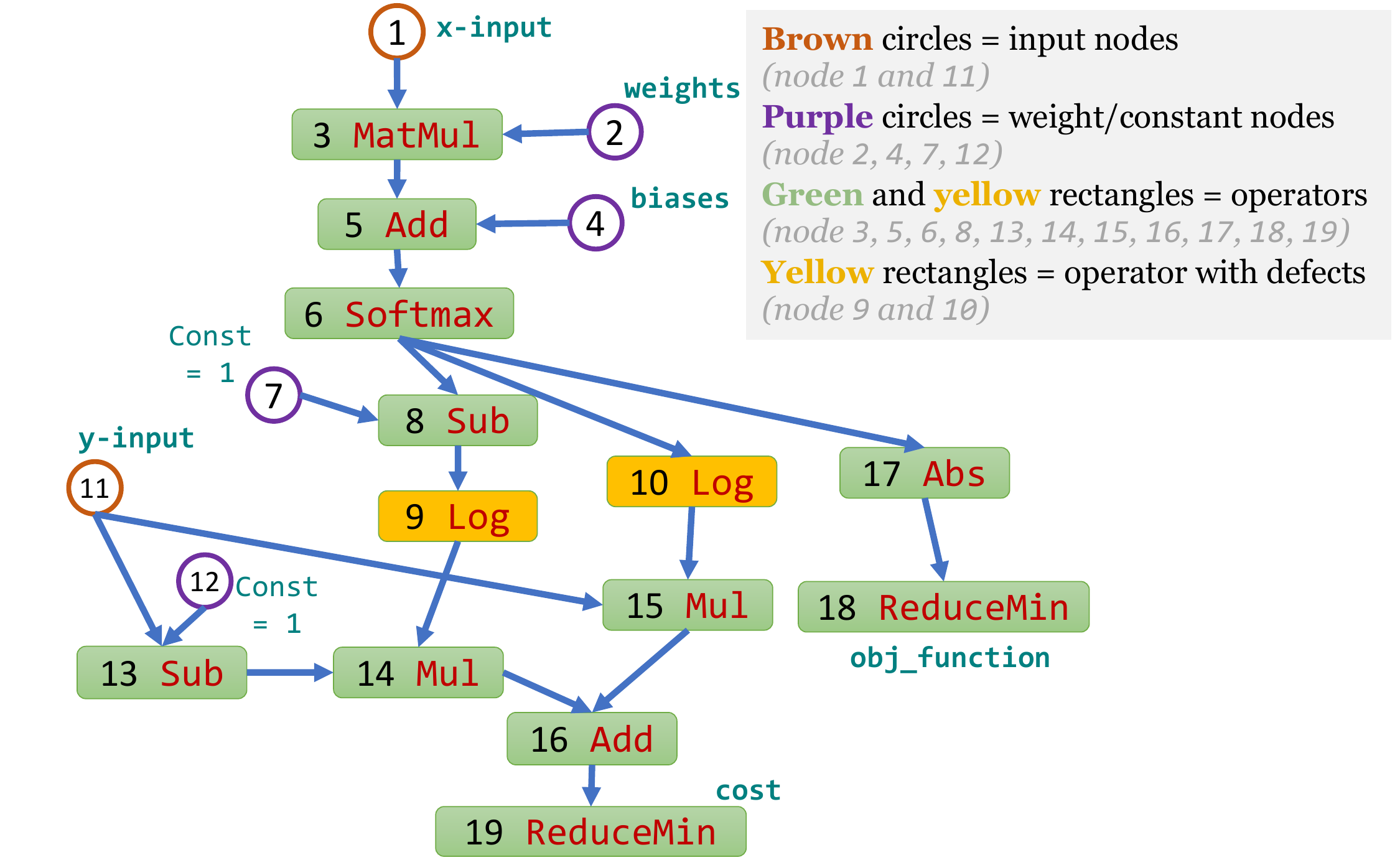}
            \vspace{-1em}
            \definecolor{figbrown}{RGB}{203,97,49}
            \definecolor{figpurple}{RGB}{112,56,153}
            \definecolor{figgreen}{RGB}{120,180,98}
            \definecolor{figyellow}{RGB}{235,163,27}
            \caption{Computational graph encoded by the snippet in \Cref{fig:dnn-code-example}.}

            \label{fig:dnn-graph-example}
            \vspace{-1.0em}
        \end{figure}
    
        \begin{figure}[t]
            \centering
            \includegraphics[width=0.90\linewidth]{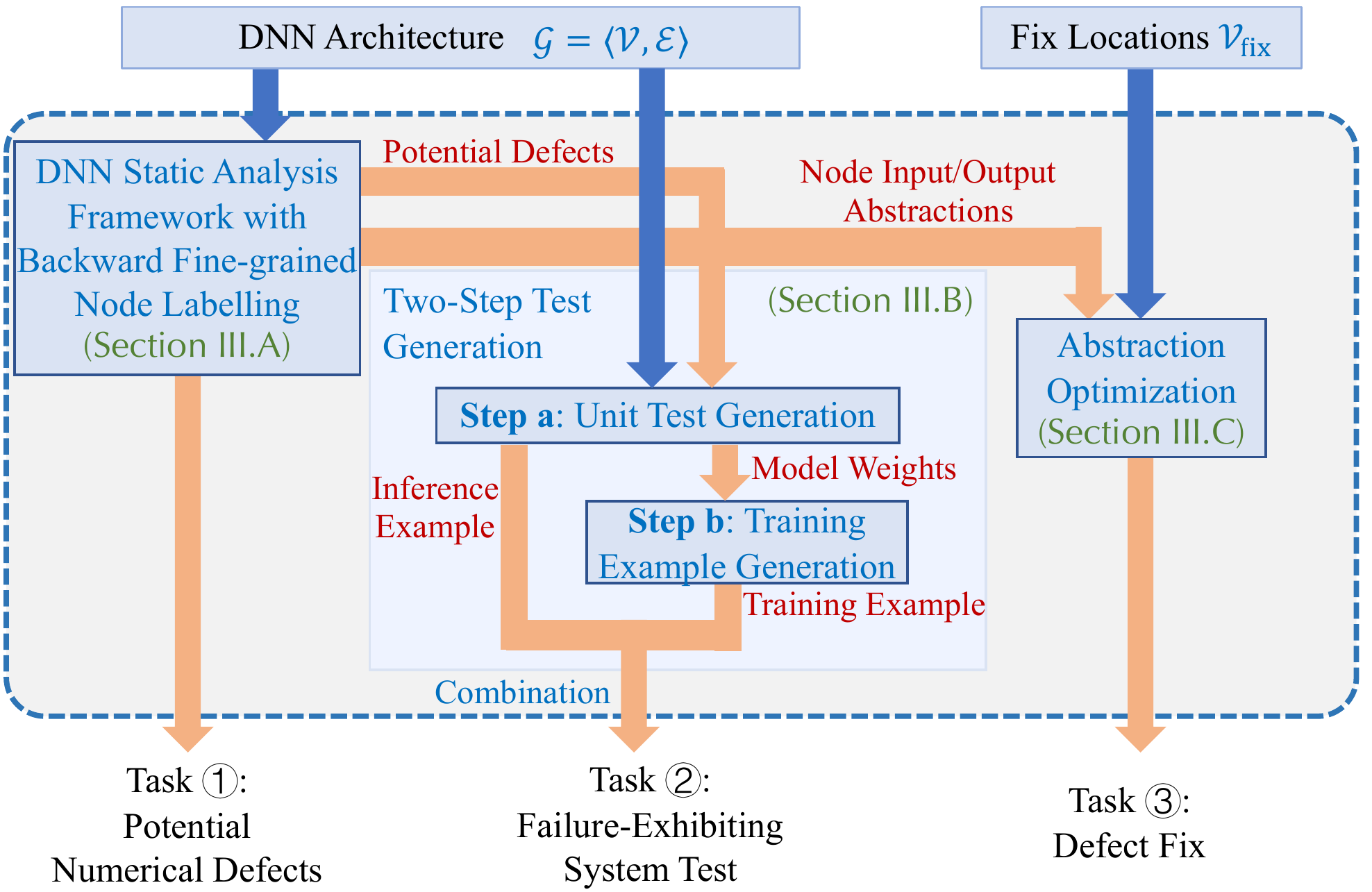}
            \vspace{-0.8em}
            \caption{Overview of the \sysname approach. 
            The output of \sysname indicates confirmation and manifestation of numerical defects (that can be feasibly exposed at the system level) for a given DNN architecture and effective fixes for the architecture's confirmed defects.
            }
            \label{fig:overview}
            \vspace{-1em}
        \end{figure}
        
        In \Cref{fig:overview}, we show the overview structure of the \sysname approach.
        \sysname takes a DNN architecture as the input.
        Note that although \sysname is mainly designed and illustrated for a DNN architecture, the \sysname approach can also be directly applied to general neural network architectures since they can also be expressed by computational graphs.
        First, the DNN static analysis framework (task \circled{1} in \Cref{fig:workflow}) in \sysname detects all potential numerical defects in the architecture.
        Second, the two-step test generation component~(task \circled{2} in \Cref{fig:workflow}), including unit test generation and training example generation,  confirms the feasibility of these potential numerical defects.
        Third, the abstraction optimization component (task \circled{3} in \Cref{fig:workflow}) takes the input/output abstractions produced by the DNN static analysis framework along with the user-specified fix locations, and produces preconditions to fix the confirmed defects. 
        
        We next go through the whole process in detail taking the DNN architecture shown in \Cref{fig:dnn-graph-example} as a running example.
        
        \paragraph{Task \circled{1}: Potential-Defect Detection via Static Analysis}
        The DNN static analysis framework within \sysname first computes the numerical intervals of possible inputs and outputs for all nodes within the given DNN architecture, and then flags any nodes whose input intervals overlap with their invalid ranges as nodes with potential numerical defects.
        
        In \Cref{fig:dnn-graph-example}, suppose that the user-specified input \CodeIn{x-input}~(node 1) is within (elementwise, same below) range $[(-10,-10)^\T, (10,10)^\T]$; \CodeIn{weights}~(node 2) are within range ${\scriptsize[\begin{bmatrix} -10 & -10 \\ -10 & -10 \end{bmatrix}, \begin{bmatrix} 10 & 10 \\ 10 & 10 \end{bmatrix}]}$; and \CodeIn{biases}~(node 4) are  within range $[(-10,-10)^\T, (10,10)^\T]$.
        Our DNN static analysis framework  computes these interval abstractions for node inputs:
        \begin{enumerate}[leftmargin=*]
            \item Node 5~(after \CodeIn{MatMul}): $[(-200,-200)^\T, (200,200)^\T]$;
            \item Node 6~(after \CodeIn{Add}): $[(-210,-210)^\T, (210,210)^\T]$;
            \item Node 8~(after \CodeIn{Softmax} in \CodeIn{float32}): $[(0,0)^\T, (1,1)^\T]$;
            \item Node 9~(after \CodeIn{Sub} of $[1,1]$ and node 8), 10: $[(0,0)^\T, (1,1)^\T]$.
        \end{enumerate}
        Since nodes 9 and 10 use the \CodeIn{Log} operator whose invalid input range $(-\infty, U_{\min})$ overlaps with their input range $[(0,0)^\T, (1,1)^\T]$, we flag nodes 9 and 10 as potential numerical defects.
        
        This static analysis process follows the state-of-the-art DEBAR tool~\cite{zhang2020detecting}.
        However, we extend DEBAR with a novel technique named backward fine-grained node labeling. 
        This technique detects all nodes that require fine-grained abstractions, e.g., nodes that determine the control flow in a dynamic graph. 
        For these nodes, we apply interval abstractions with the finest granularity to reduce control flow ambiguity.
        For other nodes, we let some neighboring elements share the same interval abstraction to improve efficiency while preserving tightness.
        As a result, the static analysis in \sysname has high efficiency and supports much more DNN operators including dynamic control-flow operators like \CodeIn{Loop} than DEBAR does.
        
        \paragraph{Task \circled{2}: Feasibility Confirmation via Two-Step Test Generation}
        Given nodes that contain potential numerical defects~(nodes 9 and 10 in our example), we generate failure-exhibiting system tests to confirm their feasibility. 
        A failure-exhibiting system test is a tuple $\langle\vxtrain, \vxinfer\rangle$, such that after training the architecture with the training example $\vxtrain$,\footnote{In particular, if our generation technique outputs $\vxtrain$, the numerical failure can be triggered if the training dataset contains \emph{only} $\vxtrain$ or \emph{only} multiple copies of $\vxtrain$ and the inference-time input is $\vxinfer$. Our technique can also be applied for generating a batch of training examples by packing the batch as a single example: $\vxtrain = ({\vxtrain}_1, {\vxtrain}_2, \dots, {\vxtrain}_B)$.} with the trained model weights $\vwinfer$, the inference input $\vxinfer$ triggers a numerical failure.
        The name ``system test'' is inspired by traditional software testing, where we test the method sequence~(\CodeIn{m = train($\vxtrain$); m.infer($\vxinfer$)}). 
        In contrast, GRIST~\cite{yan2021exposing} generates model weights $\vwinfer$ along with inference input $\vxinfer$ that tests only the inference method \CodeIn{m.infer()}, and the weights may be infeasible from training. 
        Hence, we view the GRIST-generated tuple $\langle\vwinfer, \vxinfer\rangle$ as a ``unit test''.

        We propose a two-step test generation technique to generate failure-exhibiting system tests.
        
        \emph{\textbf{Step a}: Generate failure-exhibiting unit test $\langle\vwinfer, \vxinfer\rangle$.}
        The state-of-the-art GRIST tool supports this step.
        However, GRIST solely relies on gradient back-propagation, which is relatively inefficient.
        In \sysname, we augment GRIST by combining its gradient back-propagation with random initialization inspired by recent research on DNN adversarial attacks~\cite{goodfellow2015explaining,madry2018towards}.
        As a result, \sysname achieves 17.32X speedup with 100\% success rate.
        Back to the running example in \Cref{fig:dnn-graph-example},  \sysname can generate ${\scriptsize\begin{bmatrix} 5 & -5 \\ -5 & 5 \end{bmatrix}}$ for node 2 and $(0.9, -0.9)^\T$ for node 4 as model weights $\vwinfer$; and $(10, -10)^\T$ for node 1 and $(1,0)^\T$ for node 11 as the inference input $\vxinfer$.
        Such $\vwinfer$ and  $\vxinfer$ induce input $(0,1)^\T$ and $(1,0)^\T$ for nodes 9 and 10, respectively.
        Since both nodes 9 and 10 use the \CodeIn{log} operator and $\log 0$ is undefined, both nodes 9 and 10 trigger numerical failures.
        
        \emph{\textbf{Step b}: Generate training example $\vxtrain$ that achieves model weights $\vwinfer$.}
        To the best of our knowledge, there is no automatic approach for this task yet.
        \sysname provides support for this task based on our extension of DLG attack~\cite{zhu2020deep}.
        The DLG attack is originally designed for recovering the training data from training-phase gradient leakage.
        Here, we figure out the required training gradients to trigger the numerical failure at the inference phase and then leverage the DLG attack to generate $\vxtrain$ that leads to such training gradients.
        Specifically, many DNN architectures contain operators (such as \CodeIn{ReLU}) on which DLG attack is hard to operate~\cite{serra2018bounding}.
        We combine straight-through estimator~\cite{bengio2013estimating} to provide proxy gradients and bypass this barrier.
        Back to the running example in \Cref{fig:dnn-graph-example}, supposing that the initial weights are ${\scriptsize\begin{bmatrix} -0.1 & 0.1 \\ 0.1 & -0.1 \end{bmatrix}}$ for node 2 and $(0,0)^\T$ for node 4,  \sysname can generate training example $\vxtrain$ composed of $(5.635, -5.635)^\T$ for node 1 and $(1,0)^\T$ for node 11, such that one-step training with learning rate 1 on this example leads to $\vwinfer$.
        Combining $\vxtrain$ from this step with $\vxinfer$ from \emph{step a}, we obtain a failure-exhibiting system test that confirms the feasibility of potential defects in nodes 9 and 10.
        
        
        \paragraph{Task \circled{3}: Fix Suggestion via Abstract Optimization}
        In this task, we suggest fixes for the confirmed numerical defects.
        \sysname is the first approach for this task to our knowledge.
        
        The user may prefer different fix locations, which correspond to a user-specified set of nodes $\Vfix \subseteq \gV$ to impose the fix.
        For example, if the fix method is clipping the inference input, $\Vfix$ are input nodes~(e.g., nodes 1, 11 in \Cref{fig:dnn-graph-example});
        if the fix method is clipping the model weights during training, $\Vfix$ are weight nodes~(e.g., nodes 2, 4 in \Cref{fig:dnn-graph-example});
        if the fix method is clipping before the defective operator,
        $\Vfix$ are nodes with numerical defects~(e.g., nodes 9, 10 in \Cref{fig:dnn-graph-example}).
            
        According to the empirical study of developers' fixes in \Cref{sec:intro}, 12 out of 13 defects are fixed by imposing interval preconditions for clipping the inputs of $\Vfix$.
        Hence, we suggest interval precondition, which is interval constraint $\vl_n \le \fin_{n}(\vx; \vw) \le \vu_n$ for nodes $n\in \Vfix$, as the defect fix in this paper.
        A fix should satisfy that, when these constraints $\bigwedge_{n\in\Vfix} (\vl_n \le \fin_{n}(\vx; \vw) \le \vu_n)$ are imposed, the input of any node in the computational graph should always be valid, i.e., $\fin_{n_0}(\vx; \vw) \notin \gI_{n_0, \mathsf{invalid}}, \forall n_0 \in \gV$.
        
        In \sysname, we formulate the fix suggestion task as a constrained optimization problem, taking the endpoints of interval abstractions for nodes in $\Vfix$ as optimizable variables.
        We then propose the novel technique of abstraction optimization to solve this constrained optimization problem.
        Back to the \Cref{fig:dnn-graph-example} example, if users plan to impose a fix on inference input, \sysname can suggest the fix $-1 \le $ \CodeIn{x-input} $\le 1$;
        if users plan to impose a fix on nodes with numerical defects, \sysname can suggest the fix $10^{-38} \le $ node 9 \& node 10\CodeIn{.input} $\le +\infty$.

\section{The \sysname Approach}

    \label{sec:method}
    
    In this section, we introduce the three novel techniques in \sysname: backward fine-grained node labeling in \Cref{subsec:static-analysis};
    two-step test generation in \Cref{subsec:system-test-gen}; 
    and abstraction optimization in \Cref{subsec:precond-gen}.

    \subsection{DNN Static Analysis Framework with Backward Fine-Grained Node Labeling for Potential-Defect Detection}
    
        \label{subsec:static-analysis}
        
        \sysname contains a static analysis framework to enable potential-defect detection and support downstream tasks as shown in \Cref{fig:overview}.
        Given a DNN architecture and valid ranges for input and weight nodes, the static analysis framework computes interval abstractions for possible inputs and outputs of each node. 
        As a result, we can check whether an overlap exists between the interval abstraction and invalid input ranges for all nodes in the graph to detect potential numerical defects.
        Then, the defective nodes are fed into the two-step test generation component to confirm the feasibility of potential defects; 
        and the differentiable abstractions are fed into the abstract optimization component to produce fixes.
        
        Formally, for given valid ranges of inference input and model weights, namely $\gX$ and $\gW$, for each node $n\in\gV$, our framework  computes \emph{sound} input interval abstraction $[\vl_n, \vu_n] := \{\vx: \vl_n \le \vx \le \vu_n\}$ such that $[\vl_n, \vu_n]$ always captures all possible inputs of the node: $[\vl_n, \vu_n] \supseteq \{\fin_n(\vx, \vw): \vx\in\gX, \vw\in\gW\}$.
        We also compute output interval abstractions similarly.
        
        Compared with traditional analysis tools for numerical software~\cite{gurfinkel2015seahorn,singh2017fast}, \sysname's static analysis framework designs abstractions for DNN primitives operating on multi-dimensional tensors that are not supported by traditional tools.
        Compared with the state-of-the-art DEBAR tool~\cite{zhang2020detecting}, \sysname uses the same abstraction domain~(interval domain with tensor partitioning), but incorporates a novel technique~(backward fine-grained node labeling) to improve abstraction precision and support a wider range of DNN architectures.

        \textbf{Abstract Domain: Interval with Tensor Partitioning.}
        Following \oldsysname's design, we use the interval with tensor partitioning~\cite{zhang2020detecting} as the abstraction domain. This abstraction domain partitions a  tensor into multiple  subblocks and shares the interval abstractions at the block level instead of imposing abstractions at the element level. 
        Therefore, we can compute the abstraction of a smaller size than the original tensor to improve efficiency.
        
        \paragraph{Our Technique: Backward Fine-Grained Node Labeling}
        The interval domain with tensor partitioning provides a degree of freedom in terms of the partition granularity, i.e., we can choose the subblock size for each node's abstraction.
        When the finest granularity, i.e., elementwise abstraction, is chosen, the abstraction interval is the most concrete.
        When the coarsest granularity~(i.e., one scalar to summarize the node tensor) is chosen, the abstraction saves the most space and computational cost but loses much precision.

        \emph{Example.}
        Suppose that the possible input range of a node is $([-1,0], [0,1], [1,2], [-1,0])$, where each interval $[l,u]$ specifies the range of corresponding elements in the four-dimensional vector.
        If we choose the finest granularity, we  use $[\vl_n,\vu_n] = [(-1,0,1,-1), (0,1,2,0)]$ as the input range abstraction.
        If we choose the coarsest granularity, we  use $[\vl_n, \vu_n] = [-1,2]$ as the abstraction where the same interval is shared for all elements.
        As we can see, finer granularity provides tighter abstraction at the expense of larger computational and space costs.
        
        In DEBAR, the coarsest granularity is used by default for most operators.
        However, we find that using the finest instead of the coarsest granularity for some nodes is more beneficial for overall abstraction preciseness.
        For example, the control-flow operators~(e.g., \CodeIn{Loop}) benefit from concrete execution to determine the exact control flow in the dynamic graph, and the indexing operators~(e.g., \CodeIn{Slice}) and shaping operators~(e.g., \CodeIn{Reshape}) benefit from explicit indexers and shapes to precisely infer the output range.
        Hence, we propose to use the finest granularity for some nodes~(namely fine-grained requiring operators) while the coarsest granularity for other nodes during static analysis.

        To benefit from the finest granularity abstraction for required nodes, typically, all of their preceding nodes also need the finest granularity.
        Otherwise, the over-approximated intervals from preceding nodes will be propagated to the required nodes, making the finest abstraction for the required nodes useless.
        To solve this problem, in \sysname, we back-propagate ``fine-grained'' labels from these fine-grained requiring nodes to initial nodes by topologically sorting the graph with \emph{inverted} edges, and then apply the finest granularity abstractions on all labeled nodes.
        In practice, we find that this strategy eliminates the control-flow ambiguity and indexing ambiguity with little loss of efficiency\footnote{Theoretically, using the finest granularity for tensor partitioning cannot fully eliminate the ambiguity, since interval abstraction is intrinsically an over-approximation. Nevertheless, in our evaluation (Section~\ref{sec:exp}), we find that this technique eliminates control-flow and indexing ambiguities on all 63 programs in the benchmarks.}.
        As a result, \sysname supports all dynamic graphs~(which are not supported by DEBAR) that comprise 39.2\% of the benchmarks proposed by \citeauthor{yan2021exposing}~\cite{yan2021exposing}.
        
        Furthermore, when preceding nodes use finer-grain abstraction granularity, the subsequent nodes should preserve such fine granularity to preserve the analysis preciseness.
        Principally, the choice of abstraction granularity should satisfy both tightness~(bearing no precision loss compared to elementwise interval abstraction) and minimality~(using the minimum number of partitions for high efficiency).
        To realize these principles, we dynamically determine a  node's abstraction granularity based on the granularity of preceding nodes.
        The abstraction design for some operators is non-trivial.
        Omitted details~(formulation, illustration, and proofs) about the static analysis framework are in \Cref{adxsec:static-analysis}.

        In summary, the whole static analysis process consists of three steps.  
        (1)~Determine the tensor partition granularity of all initial nodes by our technique of backward fine-grained node labeling.
        (2)~Sort all nodes in the graph in the topological order.
        (3)~Apply corresponding abstraction computation algorithms for each node based on the preceding node's abstractions.
        The key insight behind the design of our static analysis framework is the strategic granularity selection for tensor abstraction, maintaining both high efficiency~(by selecting the coarse granularity for data-intensive nodes) and high precision~(by selecting the fine granularity for some critical nodes, such as nodes with control-flow, indexing, and shaping operators).
        
    \subsection{Two-Step Test Generation for Feasibility Confirmation}
    
        \label{subsec:system-test-gen}
    
        \sysname generates failure-exhibiting system tests for the given DNN to confirm the feasibility of potential numerical defects.
        Here, we take the DNN architecture as the input.
        From the static analysis framework, we obtain a list of nodes that have potential numerical defects.
        For each node $n_0$ within the list, we apply our technique of two-step test generation to produce a failure-exhibiting system test $\tsys = \langle \vxtrain, \vxinfer \rangle$ as the output.
        According to \Cref{subsec:approach-overview}, the test should satisfy that after the architecture is trained with $\vxtrain$, entering $\vxinfer$ in the inference phase results in a numerical failure.
        
        We propose the novel technique of two-step test generation:
        first, generate failure-exhibiting unit test $\langle \vwinfer, \vxinfer \rangle$;
        then, generate training example $\vxtrain$ that leads model weights to be close to $\vwinfer$ after training.
        
        \paragraph{\textbf{Step a}: Unit Test Generation}
            As sketched in \Cref{subsec:approach-overview}, we strengthen the state-of-the-art unit test generation approach, GRIST~\cite{yan2021exposing}, by combining it with random initialization to complete this step.
            Specifically, GRIST leverages the gradients of the defective node's input with respect to the inference input and weights to iteratively update the inference input and weights to generate failure-exhibiting unit tests.
            However, GRIST always conducts updates from the existing inference input and weights, suffering from local minima problem~\cite{madry2018towards}.
            Instead, motivated by DNN adversarial attack literature~\cite{madry2018towards,tramer2020adaptive}, a sufficient number of random starts help  find global minima effectively.
            Hence, in \sysname, we first conduct uniform sampling 100 times for both the inference input and weights to trigger the numerical failure.
            If no failure is triggered, we use the sample that induces the smallest loss as the start point for gradient optimization.
            As \Cref{subsec:static-detector-exp} shows, this strategy substantially boosts the efficiency, achieving 17.32X speedup.

        \paragraph{\textbf{Step b}: Training Example Generation}
            For this step, \sysname takes the following inputs: (1)~the DNN architecture,  (2)~the failure-exhibiting unit test $\tunit = \langle \vwinfer, \vxinfer \rangle$,  and (3) the randomly initialized weights $\vw_0$.
            Our goal is to generate a legal training example $\vxtrain$, such that the model trained with $\vxtrain$ will contain weights close to $\vwinfer$.

            DNNs are typically trained with gradient-descent-based algorithms such as stochastic gradient descent~(SGD).
            In SGD, in each step $t$, we sample a mini-batch of samples from the training dataset to compute their gradients on model weights and use these gradients  to update the weights.
            We focus on one-step SGD training with a single training example, since generating a single one-step training example to exhibit a failure is more desirable for  debugging because, in one-step training, the model weights are updated strictly following the direction of the gradients. 
            Therefore, developers can inspect inappropriate weights, easily trace back to nodes with inappropriate gradients, and then fix these nodes. 
            In contrast, in multi-step training, from inappropriate weights, developers cannot trace back to inappropriate gradients because weights are updated iteratively and interactions between gradients and weights are complex~(even theoretically intractable~\cite{li2017convergence}).
            
            In this one-step training case, after training, the model weights $\vwinfer$ satisfy 
            \vspace{-0.5em}
            \begin{equation}
                \vwinfer = \vw_0 - \gamma \nabla_\vw \gL(\vxtrain; \vw_0),
                \vspace{-0.5em}
                \label{eq:one-step-sgd}
            \end{equation}
            where $\gamma \in \sR_+$ is a predefined learning rate, and $\gL$ is the predefined loss function in the DNN architecture.
            Hence, our goal becomes finding $\vxtrain$ that satisfies 
            \vspace{-0.3em}
            \begin{equation}
                \nabla_\vw \gL(\vxtrain; \vw_0) = (\vw_0 - \vwinfer) / \gamma. \label{eq:dlg-attack}
            \end{equation}
            \vspace{-1.5em}

            The DLG attack~\cite{zhu2020deep} is a technique for generating input data that induce specific weight gradients.
            The attack is originally designed for recovering training samples from monitored gradient updates.
            Since the right-hand side~(RHS) of \Cref{eq:dlg-attack} is known, our goal here is also to generate input example $\vxtrain$ that induces specific weight gradients.
            Therefore, we leverage the DLG attack to generate training example $\vxtrain$.
            
            \paragraph{Extending DLG Attack with Straight-Through Estimator}
            Directly using DLG attack suffers from an optimization challenge in our scenario.
            Specifically, in DLG attack, suppose that the target weight gradients are $\vdwtarg$, we use gradient descent over the squared error $\| \nabla_\vw \gL(\vx; \vw_0) - \vdwtarg \|_2^2$ to generate $\vx$.
            In this process, we need meaningful gradient information of this squared error loss to perform the optimization.
            However, the gradient of this loss involves second-order derivatives of $\gL(\vx;\vw_0)$, which could be zero. 
            For example, DNNs with \CodeIn{ReLU} as activation function are piecewise linear and have zero second-order derivatives almost everywhere~\cite{serra2018bounding}.
            This optimization challenge is partly addressed in DLG attack by replacing \CodeIn{ReLU} with \CodeIn{Sigmoid}, but it changes the DNN architecture~(i.e., the system under test) and hence is unsuitable.
            
            We leverage the straight-through estimator to mitigate the optimization challenge.
            Specifically, for a certain  operator, such as \CodeIn{ReLU}, we do not change its forward computation but change its backward gradient computation to provide second-order derivatives within the DLG attack process.
            For example, for \CodeIn{ReLU}, in backward computation we use the gradient of $\mathrm{Softplus}$ function, namely $1 - \frac{1}{1+\exp(\vx)}$, 
            because $\mathrm{Softplus}$ is an approximation of \CodeIn{ReLU}~\cite{glorot2011deep} with non-zero second-order derivatives.
            Note that we modify the computed gradients only within the DLG attack process.
            After such $\vxtrain$ is generated by the attack, we evaluate whether it triggers a numerical failure using the original architecture and gradients in \Cref{eq:one-step-sgd}.

            \Cref{adxsec:detail-hyperparam} lists hyperparameters used by our implementation.
            
    \subsection{Abstraction Optimization for Fix Suggestion}
    
        \label{subsec:precond-gen}
        
        In this task, we aim to generate the precondition fix given imposing locations.
        The inputs are the DNN architecture, the node $n_0$ with numerical defects, and a node set $\Vfix$ to impose the fix.
        We would like to generate interval preconditions for $\Vfix$ node inputs so that after these preconditions are imposed, the defect on $n_0$ is fixed.
        
        Formally, our task is to find $\langle l_n, u_n \rangle$ for each $n\in \Vfix$~($l_n$ and $u_n$ are scalars so the same interval bound applied to all elements of $n$'s tensor), such that for any $\vx, \vw$ satisfying $\fin_{n}(\vx;\vw) \in [l_n, u_n]$, $\forall n\in \Vfix$, for the defective node $n_0$, we have $\fin_{n_0}(\vx;\vw) \notin \gI_{n_0,\textsf{invalid}}$, where the full list of invalid input ranges $\gI_{n_0, \textsf{invalid}}$ is in \Cref{adxsec:list-numerical-defects}.
        There is an infinite number of possible $\langle l_n, u_n \rangle$ interval candidates since $l_n$ and $u_n$ are floating numbers.
        Hence, we need an effective technique to find a valid solution from the exceedingly large search space that incurs a relatively small model utility loss.
        To achieve so, we formulate a surrogate optimization problem for this task.
        \vspace{-0.5em}
        \allowdisplaybreaks
        \begin{small}
        \begin{align}
            \maximize_{l_n, u_n: n\in \Vfix} \quad s \quad
            \mathrm{s.t.}\quad & u_n \ge l_n + s(u_n^{\valid} - l_n^{\valid}), \forall n\in \Vfix, \label{eq:precond-span-cons} \\
            & l_n^{\valid} \le l_n \le u_n \le u_n^{\valid}, \forall n\in \Vfix, \label{eq:precond-valid-cons} \\
            & \gL_{n_0}^{\precondgen}(\{l_n, u_n\}_{n\in\Vfix}) < 0. \label{eq:precond-effective-cons}
        \end{align}
        \end{small}
        Here, 
        $l_n^{\valid}$ and $u_n^{\valid}$ are the valid ranges~(of the node's input $n$), which are fixed and determined by the valid ranges of input and weights.
        $\gL_{n_0}^{\precondgen}$ is the node-specific precondition generation loss that is the distance between the furthest endpoint of defective node $n_0$'s interval abstraction and $n_0$'s valid input range.
        Hence, when $\gL_{n_0}^{\precondgen}(\{l_n, u_n\}_{n\in\Vfix})$ becomes negative, the solution $\{l_n, u_n\}_{n\in \Vfix}$ is a valid precondition.
        The optimization variables are the precondition interval endpoints $l_n$ and $u_n$ and the objective is the relative span of these intervals.
        The larger the span is, the looser the precondition constraints are, and the less hurt they are for the model's utility.
        \Cref{eq:precond-span-cons} enforces the interval span requirement.
        \Cref{eq:precond-valid-cons} assures that the precondition interval is in the valid range.
        \Cref{eq:precond-effective-cons} guarantees the validity of the precondition as a fix.
        
        For any $\{l_n, u_n\}_{n\in \Vfix}$, thanks to \sysname's static analysis framework, we can compute induced intervals of defective node $n_0$, and thus compute the loss value $\gL_{n_0}^{\precondgen}$.
        
        \setlength{\textfloatsep}{0pt}
        \begin{algorithm}[!t]
            \caption{Abstraction Optimization~(\Cref{subsec:precond-gen})}
            \label{alg:abs-opt}
            \begin{algorithmic}[1]
                \small
                \renewcommand{\algorithmicrequire}{\textbf{Input:}}
                \renewcommand{\algorithmicensure}{\textbf{Output:}}
                \REQUIRE DNN architecture $\gG=\langle \gV, \gE \rangle$, defective node $n_0 \in \gV$, nodes to impose fix $\Vfix \subseteq \gV$
                \STATE $s\gets 1, \gamma_s\gets 0.9, \gamma_c\gets 0.1, \minstep\gets 0.1, \maxiter\gets 1000$ 
                \STATE $c_n \gets (l_n^{\valid} + u_n^{\valid}) / 2, l_n \gets l_n^{\valid}, u_n \gets u_n^{\valid},\,  \forall n \in \Vfix$
                \FOR {$i = 1$ to $\maxiter$}
                    \FOR {$n\in\Vfix$}
                        \STATE \textsf{loss} $\gets \gL_{n_0}^{\precondgen}(\{l_{n'}, u_{n'}\}_{n'\in\Vfix})$
                        \STATE $c_n \gets c_n - \gamma_c \max\{|c_n|,\minstep\}\sgn(\nabla_{c_n} \mathsf{loss})$
                        \STATE $(l_n, u_n) \gets (c_n - \frac{s(u_n^{\valid} - l_n^{\valid})}{2}, c_n + \frac{s(u_n^{\valid} - l_n^{\valid})}{2})$
                        \STATE $(l_n, u_n) \gets (\max \{l_n, l_n^{\valid}\}, \min \{u_n, u_n^{\valid}\})$ 
                    \ENDFOR
                    \IF {$\gL_{n_0}^{\precondgen}(\{l_n, u_n\}_{n\in\Vfix}) < 0$}
                        \RETURN $\{l_n, u_n\}_{n\in\Vfix}$ \COMMENT{Find precondition fix}
                    \ENDIF
                    \STATE $s \gets \gamma_s \cdot s$
                \ENDFOR
                \RETURN ``failed'' \COMMENT{Failed to find precondition fix}
            \end{algorithmic}
        \end{algorithm}

        As shown in \Cref{alg:abs-opt}, we propose the technique of \textbf{abstraction optimization} to effectively and approximately solve this optimization.
        Our technique works iteratively.
        In the first iteration, we set span $s=1$, and in the subsequent iterations, we reduce the span $s$ exponentially as shown in Line 13 where hyperparameter $\gamma_s = 0.9$.
        Inside each iteration, for each node to impose precondition $n\in\Vfix$, we use the interval center $c_n = (l_n + u_n)/2$ as the optimizable variable and compute the \emph{sign} of its gradient: $\sgn(\nabla_{c_n} \mathsf{loss})$.
        We use this gradient sign to update each $c_n$ toward  reducing the loss value in Line 6.
        Then, we use $c_n$ and the span $s$ to recover the actual interval in Line 7 and clip $l_n$ and $u_n$ by the valid range $[l_n^{\valid}, u_n^{\valid}]$ in Line 8.
        At the end of this iteration, for updated $l_n$ and $u_n$, we compute $\gL_{n_0}^{\precondgen}(\{l_n, u_n\}_{n\in\Vfix})$ to check whether the precondition is a fix.
        If so, we terminate; otherwise, we proceed to the next iteration.
        We note that \emph{if the algorithm finds a precondition, the precondition is guaranteed to be a valid fix} by the soundness nature of our static analysis framework and the definition of $\gL_{n_0}^{\precondgen}$.
        When no feasible precondition is found within $\maxiter = 1000$ iterations, we terminate the algorithm and report ``failed to find the fix''.
        
        \begin{remark}
            The key ingredient in the technique is the gradient-sign-based update rule~(shown in Line 6), which is much more effective than normal gradient descent for two reasons. 
            (1)~Our update rule can get rid of gradient explosion and vanishing problems.
            For early optimization iterations, the span $s$ is large and interval bounds are generally coarse, resulting in too large or too small gradient magnitude.
            For example, the input range for \CodeIn{Log} could be $[1,10^{10}]$ where gradient can be $10^{-10}$, resulting in almost negligible gradient updates.
            In contrast, our update rule leverages the gradient sign, which always points to the correct gradient direction.
            The update step size in our rule is the maximum of current magnitude $|c_n|$ and $\minstep$ to avoid stagnation.
            (2)~Our update rule mitigates the gradient magnitude discrepancy of different $c_n$.
            At different locations, the nodes in DNNs can have diverse value magnitudes that are not aligned with their gradient magnitudes, making gradient optimization challenging.
            Therefore, we use this update rule to solve the challenge, where the update magnitude depends on the value magnitude ($|c_n|$) instead of gradient magnitude ($\nabla_{c_n} \mathsf{loss}$).
            We empirically compare our technique with standard gradient descent in \Cref{subsec:precond-gen-exp}.
        \end{remark}
        
    \begin{table*}[!t]
    \centering
    \caption{(RQ1) Results of task \circled{2}a~(failure-exhibiting \underline{\textbf{unit}} test generation) with \sysname and GRIST~\cite{yan2021exposing}.
    C is the number
    of runs where numerical failures are triggered in 10 repeated runs, T is 
    the average execution time per run, and $\Uparrow$T is the average time improvement achieved by \sysname compared to GRIST.}
    \vspace{-1em}
\resizebox{0.95\linewidth}{!}{
    \begin{tabular}{r|rrr|rr||r|rrr|rr||r|rrr|rr||r|rrr|rr}
    \toprule
     Case & \multicolumn{3}{c|}{\sysname} & \multicolumn{2}{c||}{GRIST} & Case & \multicolumn{3}{c|}{\sysname} & \multicolumn{2}{c||}{GRIST} & Case & \multicolumn{3}{c|}{\sysname} & \multicolumn{2}{c||}{GRIST} & Case & \multicolumn{3}{c|}{\sysname} & \multicolumn{2}{c}{GRIST} \\
     \cline{2-6}\cline{8-12}\cline{14-18}\cline{20-24}
     ID & C & T & $\Uparrow$T & C & T & ID & C & T & $\Uparrow$T & C & T & ID & C & T & $\Uparrow$T & C & T & ID & C & T & $\Uparrow$T & C & T \\
    \hline
1 & 10 & 9.01 & 1.20 X & 10 & 10.77 & 16b & 10 & 0.21 & 20.85 X & 10 & 4.42 & 32 & 10 & 0.06 & 27.93 X & 10 & 1.77 & 47 & 10 & 0.06 & 32.51 X & 10 & 1.87 \\
2a & 10 & 0.02 & 9.75 X & 10 & 0.24 & 16c & 10 & 0.25 & 17.54 X & 10 & 4.43 & 33 & 10 & 0.06 & 33.63 X & 10 & 1.91 & 48a & 10 & 0.38 & 3.06 X & 10 & 1.17 \\
2b & 10 & 0.03 & 614.68 X & 10 & 16.54 & 17 & 10 & 439.19 & $+\infty$ & 0 & - & 34 & 10 & 0.06 & 33.95 X & 10 & 1.90 & 48b & 10 & 0.15 & 7.12 X & 10 & 1.10 \\
3 & 10 & 0.02 & 432.11 X & 10 & 8.67 & 18 & 10 & 0.02 & 1040.46 X & 10 & 22.17 & 35a & 10 & 0.44 & 61.76 X & 10 & 27.33 & 49a & 10 & 0.49 & 41.09 X & 10 & 20.07 \\
4 & 10 & 0.01 & 1.00 X & 10 & 0.01 & 19 & 10 & 0.16 & 689.66 X & 10 & 107.78 & 35b & 10 & 0.45 & 819.02 X & 10 & 364.86 & 49b & 10 & 0.50 & 612.22 X & 10 & 307.24 \\
5 & 10 & 0.05 & 6.48 X & 10 & 0.34 & 20 & 10 & 0.16 & 3237.27 X & 10 & 511.06 & 36a & 10 & 0.44 & 41.80 X & 10 & 18.58 & 50 & 10 & 0.16 & 781.02 X & 10 & 126.80 \\
6 & 10 & 0.84 & 5.20 X & 10 & 4.38 & 21 & 10 & 0.16 & 259.73 X & 10 & 42.09 & 36b & 10 & 0.46 & 783.16 X & 10 & 362.41 & 51 & 10 & 1.88 & 671.55 X & 3 & 1263.04 \\
7 & 10 & 0.87 & 4.54 X & 10 & 3.96 & 22 & 10 & 0.94 & 1518.43 X & 10 & 1433.12 & 37 & 10 & 0.06 & 38.34 X & 10 & 2.39 & 52 & 10 & 0.15 & 336.37 X & 10 & 50.59 \\
8 & 10 & 0.86 & 4.63 X & 10 & 3.99 & 23 & 10 & 0.01 & 157.72 X & 10 & 1.88 & 38 & 10 & 0.06 & 34.50 X & 10 & 1.94 & 53 & 10 & 0.05 & 36.64 X & 10 & 1.92 \\
9a & 10 & 0.20 & 11.03 X & 10 & 2.22 & 24 & 10 & 0.81 & 40.72 X & 10 & 33.05 & 39a & 10 & 0.43 & 42.79 X & 10 & 18.30 & 54 & 10 & 0.05 & 36.76 X & 10 & 1.83 \\
9b & 10 & 0.14 & 14.46 X & 10 & 2.09 & 25 & 10 & 0.04 & 1271.88 X & 10 & 44.70 & 39b & 10 & 0.43 & 843.66 X & 10 & 362.22 & 55 & 10 & 0.82 & 44.63 X & 10 & 36.63 \\
10 & 10 & 0.17 & 228.42 X & 10 & 39.64 & 26 & 10 & 0.05 & 37.96 X & 10 & 2.00 & 40 & 10 & 0.04 & 1995.27 X & 10 & 85.97 & 56 & 10 & 0.06 & 35.04 X & 10 & 1.93 \\
11a & 10 & 0.15 & 27.58 X & 10 & 4.26 & 27 & 10 & 0.01 & 185.61 X & 10 & 1.91 & 41 & 10 & 0.04 & 1967.23 X & 10 & 86.36 & 57 & 10 & 0.01 & 177.45 X & 10 & 1.88 \\
11b & 10 & 0.13 & 34.75 X & 10 & 4.38 & 28a & 10 & 24.37 & -13.30 X & 10 & 1.83 & 42 & 10 & 0.05 & 1934.84 X & 10 & 87.89 & 58 & 10 & 0.83 & 12.01 X & 10 & 9.95 \\
11c & 10 & 0.11 & 4499.86 X & 10 & 516.13 & 28b & 10 & 24.17 & 7.28 X & 10 & 176.02 & 43a & 10 & 0.48 & 35.63 X & 10 & 16.96 & 59 & 10 & 0.02 & 105.40 X & 10 & 1.94 \\
12 & 10 & 0.26 & 135.94 X & 10 & 34.69 & 28c & 10 & 0.12 & 8.69 X & 10 & 1.02 & 43b & 10 & 0.45 & 4008.93 X & 10 & 1800.00 & 60 & 10 & 0.15 & 221.97 X & 10 & 34.19 \\
13 & 10 & 0.01 & 1.10 X & 10 & 0.01 & 28d & 10 & 0.12 & 1518.28 X & 10 & 176.02 & 44 & 10 & 0.27 & 579.29 X & 10 & 155.38 & 61 & 10 & 0.35 & 53.29 X & 10 & 18.78 \\
14 & 10 & 0.80 & 107.96 X & 10 & 86.23 & 29 & 10 & 0.89 & 16.83 X & 10 & 14.98 & 45a & 10 & 0.16 & 417.25 X & 10 & 68.08 & 62 & 10 & 1.85 & 72.19 X & 10 & 133.62 \\
15 & 10 & 1.71 & 5.95 X & 10 & 10.18 & 30 & 10 & 0.16 & 222.12 X & 10 & 35.61 & 45b & 10 & 0.88 & 14.69 X & 10 & 12.98 & 63 & 10 & 2.06 & 117.12 X & 10 & 240.68 \\
\cline{19-24}
16a & 10 & 0.12 & 34.24 X & 10 & 4.02 & 31 & 10 & 3.13 & 2.41 X & 3 & 7.54 & 46 & 10 & 0.01 & 168.39 X & 10 & 1.88 & \textbf{Tot: 79} & \textbf{790} & \textbf{6.66} & \textbf{17.32 X} & \textbf{766} & \textbf{115.30} \\
    \bottomrule
    \end{tabular}
}
    \label{tab:unittest}
    \vspace{-1.0em}
\end{table*}

\section{Experimental Evaluation}

    \label{sec:exp}

    We conduct a systematic experimental evaluation to answer the following research questions.

    \begin{enumerate}[leftmargin=*,label={\textbf{RQ\arabic*}}]
        \item
        For tasks already supported by existing state-of-the-art~(SOTA)  tools~(tasks \circled{1} and \circled{2}a), how much more effective and efficient is \sysname compared to these SOTA  tools?
    
        \item 
        For feasibility confirmation via \emph{generating failure-exhibiting system tests}~(task \circled{2}), how much more effectively and efficiently can \sysname confirm potential numerical defects compared to baseline approaches?
        
        \item 
        For \emph{suggesting fixes}~(task \circled{3}), how much more efficient and effective is \sysname in terms of guarding against numerical failures compared to baseline approaches and developers' fixes, respectively?
        
    \end{enumerate}
    
    For RQ1, we compare \sysname with all SOTA tools.
    For RQ2 and RQ3, \sysname is the first approach to the best of our knowledge, so we compare \sysname with baseline approaches (constructed by leaving our novel techniques out of  \sysname) and developers' fixes.
    We conduct the evaluation on the GRIST benchmarks~\cite{yan2021exposing}, being the largest dataset of real-world DNN numerical defects to our knowledge. The benchmarks contain 63 real-world DL programs with numerical defects collected from previous studies and GitHub.
    Each program contains a DNN architecture, and each architecture has one or more numerical defects.
    There are 79 real numerical defects in total.

    We perform our evaluation on a Linux workstation with a 24-core Xeon E5-2650 CPU running at 2.20 GHz.
    Throughout the evaluation, we stop the execution after reaching \SI{30}{min} limit by following the evaluation setup by the most recent related work~\cite{yan2021exposing}.

    \subsection{RQ1: Comparison with SOTA Tools}
        \label{subsec:static-detector-exp}
        \label{subsec:compare-with-sota}
        
        
        For two tasks, existing tools can provide automatic support: potential-defect detection~(task \circled{1}) where the SOTA tool is DEBAR~\cite{zhang2020detecting}, and failure-exhibiting unit test generation~(task \circled{2}a) where the SOTA tool is GRIST~\cite{yan2021exposing}.
        We compare \sysname with these tools on their supported tasks, respectively.
        
        \paragraph{Comparison with DEBAR}
        \sysname successfully detects all 79 true defects and DEBAR detects only 48 true defects according to both our evaluation and the literature~\cite{yan2021exposing}.
        Hence, \sysname detects 64.58\% more true defects than DEBAR.
        In terms of efficiency, DEBAR and \sysname have similar running time, and both finish in \SI{3}{s} per case.
        
        
        We manually inspect the cases where DEBAR fails but \sysname succeeds.
        They correspond to DL programs written with the PyTorch library, which generates dynamic computational graphs that DEBAR cannot handle.
        In contrast, \sysname provides effective static analysis support for dynamic computational graphs thanks to our backward fine-grained node labeling technique~(\Cref{subsec:static-analysis}) that is capable of disambiguating the control flow within dynamic graphs.

        
        \begin{table*}[!t]
    \centering
    \vspace{0.2em}
    \caption{(RQ2) Results of task \circled{2}~(failure-exhibiting \underline{\textbf{system}} test generation) with \sysname and Random~(baseline).
    C is the total number of runs where numerical failures are triggered in 10 repeated runs.
    T is the average execution time per run.}
    \vspace{-1em}
\resizebox{0.95\linewidth}{!}{
    \begin{tabular}{r|rr|rr||r|rr|rr||r|rr|rr||r|rr|rr||r|rr|rr}
    \toprule
    Case & \multicolumn{2}{c|}{\sysname} & \multicolumn{2}{c||}{Random} & 
    Case & \multicolumn{2}{c|}{\sysname} & \multicolumn{2}{c||}{Random} & 
    Case & \multicolumn{2}{c|}{\sysname} & \multicolumn{2}{c||}{Random} &
    Case & \multicolumn{2}{c|}{\sysname} & \multicolumn{2}{c||}{Random} &
    Case & \multicolumn{2}{c|}{\sysname} & \multicolumn{2}{c}{Random} \\
    \cline{2-5}
    \cline{7-10}
    \cline{12-15}
    \cline{17-20}
    \cline{22-25}
    ID & C & T & C & T &
    ID & C & T & C & T &
    ID & C & T & C & T &
    ID & C & T & C & T &
    ID & C & T & C & T \\
    \hline
    
1 & 0 & 9.01 & 0 & 1806.13 & 13 & 10 & 0.01 & 10 & 0.01 & 27 & 10 & 0.01 & 10 & 0.01 & 38 & 1 & 0.13 & 0 & 1800.65 & 49b & 10 & 0.50 & 8 & 364.09 \\
2a & 10 & 0.03 & 10 & 0.06 & 14 & 10 & 12.60 & 10 & 0.50 & 28a & 0 & 24.37 & 0 & 1920.29 & 39a & 10 & 0.43 & 1 & 1623.72 & 50 & 10 & 4.89 & 10 & 0.16 \\
2b & 10 & 0.03 & 10 & 0.06 & 15 & 0 & 1.71 & 0 & 2107.24 & 28b & 0 & 24.17 & 0 & 1911.26 & 39b & 10 & 0.43 & 8 & 364.10 & 51 & 10 & 49.12 & 10 & 2.10 \\
3 & 10 & 0.02 & 10 & 0.05 & 16a & 10 & 0.12 & 10 & 0.75 & 28c & 10 & 0.12 & 10 & 0.53 & 40 & 10 & 0.06 & 10 & 0.02 & 52 & 10 & 4.87 & 10 & 0.15 \\
4 & 10 & 0.01 & 10 & 0.01 & 16b & 10 & 0.21 & 0 & 1834.44 & 28d & 10 & 0.12 & 10 & 0.48 & 41 & 10 & 0.06 & 10 & 0.02 & 53 & 10 & 0.07 & 10 & 0.03 \\
5 & 10 & 0.05 & 10 & 0.06 & 16c & 10 & 0.25 & 0 & 1831.67 & 29 & 10 & 0.89 & 10 & 11.96 & 42 & 10 & 0.06 & 10 & 0.02 & 54 & 10 & 0.07 & 10 & 0.02 \\
6 & 10 & 0.84 & 10 & 12.42 & 17 & 10 & 549.98 & 10 & 235.11 & 30 & 10 & 4.88 & 10 & 0.14 & 43a & 10 & 0.48 & 1 & 1623.71 & 55 & 10 & 0.82 & 10 & 12.79 \\
7 & 10 & 0.87 & 10 & 12.51 & 18 & 10 & 0.02 & 10 & 0.05 & 31 & 10 & 14.62 & 10 & 9.31 & 43b & 10 & 0.45 & 9 & 184.14 & 56 & 10 & 0.07 & 10 & 0.03 \\
8 & 10 & 0.86 & 10 & 12.37 & 19 & 10 & 4.88 & 10 & 0.16 & 32 & 10 & 0.08 & 10 & 0.03 & 44 & 10 & 0.27 & 10 & 1.36 & 57 & 10 & 0.01 & 8 & 360.01 \\
9a & 10 & 0.20 & 7 & 541.25 & 20 & 10 & 4.88 & 10 & 0.14 & 33 & 10 & 0.07 & 10 & 0.02 & 45a & 10 & 4.89 & 10 & 0.15 & 58 & 10 & 0.83 & 10 & 12.28 \\
9b & 10 & 0.14 & 10 & 1.39 & 21 & 10 & 4.89 & 10 & 0.14 & 34 & 10 & 0.42 & 10 & 0.20 & 45b & 10 & 0.88 & 10 & 12.27 & 59 & 10 & 0.02 & 10 & 0.05 \\
10 & 10 & 4.90 & 10 & 0.16 & 22 & 10 & 500.10 & 0 & 1801.60 & 35a & 10 & 0.44 & 10 & 4.01 & 46 & 10 & 0.01 & 10 & 0.01 & 60 & 10 & 4.88 & 10 & 0.16 \\
11a & 10 & 0.15 & 10 & 0.72 & 23 & 10 & 0.01 & 10 & 0.01 & 35b & 10 & 0.45 & 10 & 4.22 & 47 & 10 & 0.08 & 10 & 0.03 & 61 & 10 & 9.84 & 10 & 0.93 \\
11b & 10 & 0.13 & 10 & 0.76 & 24 & 10 & 0.81 & 10 & 12.52 & 36a & 10 & 0.44 & 1 & 1623.76 & 48a & 10 & 9.89 & 10 & 0.90 & 62 & 10 & 48.86 & 10 & 2.73 \\
11c & 10 & 0.11 & 10 & 0.74 & 25 & 10 & 0.04 & 10 & 0.15 & 36b & 10 & 0.46 & 3 & 1263.79 & 48b & 10 & 4.88 & 10 & 0.15 & 63 & 10 & 49.06 & 10 & 2.15 \\
\cline{21-25}
12 & 10 & 0.26 & 10 & 0.72 & 26 & 10 & 0.07 & 10 & 0.03 & 37 & 2 & 0.07 & 2 & 1440.50 & 49a & 10 & 0.49 & 1 & 1623.88 & \textbf{Tot: 79} & \textbf{733} & \textbf{17.31 (19.30X)} & \textbf{649} & \textbf{334.14} \\

\bottomrule
    \end{tabular}
}
    \label{tab:systest}
    \vspace{-1.9em}
\end{table*}
        
        \paragraph{Comparison with GRIST}
        Results are shown in \Cref{tab:unittest}.
        Since both \sysname and GRIST have a randomness component where \sysname uses random initialization and GRIST relies on DNN's randomly initialized weights, we repeat both approaches for 10 runs, record the total number of times where a failure-exhibiting unit test is generated, and the average execution time per run.
        \sysname succeeds in \emph{all} cases and \emph{all} repeated runs, and GRIST fails to generate such unit test in 24 out of 790 runs~(i.e., 96.96\% success rate).
        \sysname has \SI{6.66}{s} average execution time and is 17.32X faster than~GRIST.
        
        The superior effectiveness and efficiency of \sysname are largely due to the existence of random initialization as introduced in \Cref{subsec:system-test-gen}.
        We observe that since GRIST always takes initial model weights and inference input as the starting point to update from, the generated unit test is just slightly changed from initial weights and input, being hard to expose the target numerical defect.
        In contrast, \sysname uses random initialization to explore a much larger space and combines gradient-based optimization to locate the failure-exhibiting instances from the large space.
        We also evaluate  the pure random strategy that  uses only random initialization without gradient-based optimization, and such strategy fails in 30 runs, being inferior to both \sysname and GRIST, implying that both random  initialization and gradient-based optimization are important.
        Among all the 79 cases, \sysname is slower than GRIST on only one case~(28a).
        For this case, we find the default inference input loaded by the DNN program~(used by GRIST) is not far from a failure-exhibiting one, but a randomly sampled inference input~(used by \sysname) is usually far from that.
        Hence, \sysname takes more iterations to find a failure-exhibiting inference input by the nature of gradient-based optimization.

    \subsection{RQ2: Feasibility Confirmation via System Test Generation}
    
        \label{subsec:system-test-gen-exp}
        

        
        In task \circled{2}, \sysname confirms the feasibility of potential numerical defects by generating failure-exhibiting system tests.
        
        \paragraph{Baseline}
        Since \sysname is the first approach for this task,
        we do not compare with existing literature and propose one random-based approach~(named ``Random'' hereinafter)  as the baseline.
        In ``Random'', we first generate a  failure-exhibiting unit test with random sampling.
        If there is any sample that triggers a  failure, we stop and keep the inference example part as the desired $\vxinfer$.
        Then, we generate $\vxtrain$ again by random sampling.
        If any sample, when used for training, could induce model weights $\vw$ that cause a numerical failure when using $\vxinfer$ as the inference input, we keep that sample as $\vxtrain$ and terminate.
        If and only if both $\vxinfer$ and $\vxtrain$ are found, we count this run as a ``success'' one for ``Random''.
        
        For each defect, due to the randomness of the model's initial weights, we repeat both \sysname and ``Random'' for $10$ runs.
        Both approaches use the same set of random seeds.
        
        \paragraph{Evaluation Result}
        Results are in \Cref{tab:systest}.
        We observe that \sysname succeeds in $733 / (79\times 10) = 92.78\%$ runs and the baseline ``Random'' succeeds in $649 / (79\times 10) = 82.15\%$ runs.
        Moreover, \sysname spends only \SI{17.31}{s} time on average per run, which is a 19.30X speedup compared to ``Random''.
        We also observe that \sysname is more reliable across repeated runs. There are only 6 cases with unsuccessful repeated runs in \sysname, but there are 19 such cases in ``Random''. 
        Hence, \sysname is substantially more effective, efficient, and reliable for generating system tests than the baseline.
        
        \paragraph{Discussion}
        The high effectiveness of \sysname mainly comes from the advantage of gradient-guided search compared with random search.
        As described in \Cref{subsec:system-test-gen}, \sysname leverages both first-order gradients~(in step a) and second-order derivatives~(in step b) to guide the search of system tests.
        In contrast, ``Random'' uses random sampling hoping that failure-exhibiting training examples can emerge after sufficient sampling.
        Hence, when such training examples are relatively rare in the whole valid input space, ``Random'' is less effective.
        We conduct an ablation study~(in \Cref{adxsec:system-testgen-by-stage}) for showing that \sysname improves over ``Random'' in both steps inside \sysname.
        
        \paragraph{Failing-Case Analysis}
        We study all six defects where \sysname may fail and have the following findings.
        (1)~For four defects~(Case IDs 1, 15, 37, and 38), the architecture is challenging for gradient-based optimization, e.g., due to the \CodeIn{Min}/\CodeIn{Max}/\CodeIn{Softmax} operators that provide little or no gradient information.
        We leave it as future work to solve these cases, likely in need of  dynamically detecting operators with vanishing gradients and reconstructing the gradient flow.
        (2)~Two defects~(Case IDs 28a and 28b) correspond to those caused by \CodeIn{Div} operators where only a close-to-zero divisor can trigger a  numerical failure. 
        Hence, for operators with narrow invalid ranges, RANUM may fail to generate failure-exhibiting system tests. 
        

    \subsection{RQ3: Fix Suggestion}
    
        \label{subsec:precond-gen-exp}
        
        In task \circled{3}, \sysname suggests fixes for numerical defects.
        We compare \sysname with fixes generated by baseline approaches and developers' fixes.
        
        \subsubsection{\textbf{Comparison between \sysname and Baselines}}
            \begin{table}[!t]
    \centering
    \caption{(RQ3) Results of task \circled{3}~(fix suggestion) under three  imposing location specifications
    with \sysname and two baselines~(\sysname-E and GD). \# is the  number of fixes found. ``Time (s)'' is the total running time for all 79 cases.
    }
    \vspace{-1.0em}
\resizebox{0.9\linewidth}{!}{
    \begin{tabular}{c|cc|cc|cc}
        \toprule
        Imposing & \multicolumn{2}{c|}{\sysname} & \multicolumn{2}{c|}{\sysname-E} & \multicolumn{2}{c}{GD} \\
        \cline{2-7}
        Locations & \# & Time (s) & \# & Time (s) & \# & Time (s) \\
        \hline
        Weight + Input & \textbf{79} & \textbf{54.23} & 78 & 540.13 & 57 & 188.63 \\
        Weight & \textbf{72} & \textbf{58.47} & 71 & 581.86 & 43 & 219.28 \\
        Input & \textbf{37} & \textbf{924.74} & \textbf{37} & 3977.30 & 29 & 952.19 \\
        \bottomrule
    \end{tabular}

}
    \label{tab:precond-summary}
\end{table}


            
            \sysname is the first approach for this task,
            and we propose two baseline approaches to compare with. 
            (1)~\sysname-E: this approach changes the abstraction domain of \sysname from interval with tensor partitioning to standard interval.
            To some degree, \sysname-E represents the methodology of conventional static analysis tools that use standard interval domain for abstraction and search of effective fixes. 
            (2)~GD: this approach uses standard gradient descent for optimization instead of the abstraction optimization technique in \sysname.
            
            \paragraph{Evaluation Protocol}
            We evaluate whether each approach can generate fixes that eliminate \emph{all} numerical defects for the DNN architecture under analysis  given imposing locations.
            We consider three types of locations: on both weight and input nodes, on only weight nodes, and on only input nodes.
            In practice, model providers can impose fixes on weight nodes by clipping weights after a model is trained; and users can impose fixes on input nodes by clipping their inputs before loading them into the model.
            Since all approaches are deterministic, for each case we run only once.
            We say that the fix eliminates all numerical defects if and only if (1)~the \sysname static analysis framework cannot detect any defects from the fixed architecture; and (2)~1,000 random samples cannot trigger any numerical failures after imposing the fix.
            
            \paragraph{Evaluation Result}
            We report the statistics, including the number of successful cases among all the 79 cases and the total running time, in \Cref{tab:precond-summary}.
            From the table, we observe that on all the three imposing location settings, \sysname always succeeds in most cases and spends much less time.
            For example, when fixes can be imposed on both weights and input nodes, \sysname succeeds on \emph{all} cases with a total running time \SI{54.23}{s}.
            In contrast, \sysname-E requires $>10 \times$ time, and GD succeeds in only $72.15\%$ cases.
            Hence, \sysname is substantially more effective and efficient for suggesting fixes compared to baseline approaches.

            Since \sysname is based on iterative refinement~(see \Cref{alg:abs-opt}), we study the number of iterations needed for finding the fix.
            When fixes can be imposed on both weight  and input nodes, where \sysname succeeds on all the 79 cases, the average number of iterations is $29.80$, the standard deviation is $14.33$, the maximum is $53$, and the minimum is $2$.
            Hence, when \sysname can find the fix, the number of iterations is small, coinciding with the small total running time \SI{54.23}{s}.
            
            \paragraph{Discussion}
            The two baseline approaches can be viewed as ablated versions of \sysname.
            Comparing \sysname and GD, we conclude that the technique of abstraction optimization substantially improves the effectiveness and also improves the efficiency.
            Comparing \sysname and \sysname-E, we conclude that the interval abstraction with tensor partitioning as the abstraction domain substantially improves the efficiency and also improves the effectiveness.
            
            From \Cref{tab:precond-summary}, it is much easier to find the fix when imposing locations are weight nodes compared to input nodes.
            Since model providers can impose fixes on weights and users impose on inputs, this finding implies that fixing numerical defects on the providers' side may be more effective than on the users' side.

        \subsubsection{\textbf{Comparison between \sysname and Developers' Fixes}}
                
            We conduct an empirical study to compare the fixes generated by \sysname and by the developers.
            
            \paragraph{Evaluation Protocol}
            We manually locate GitHub repositories from which the GRIST benchmarks are constructed.
            Among the 79 cases, we find the repositories for 53 cases on GitHub and we study these cases.
            We locate the developers' fixes of the numerical defects by looking at issues and follow-up pull requests.
            Since \sysname suggests different fixes for different imposing locations, for each case we first determine the imposing locations from the developer's fix, and then compare with \sysname's fix for these locations.

            \sysname fixes are on the computational graph and developers’ fixes are in the source code, so we determine to conduct code-centered comparison: 
            \sysname fixes are considered feasible only when the fixes can be easily implemented by code~(within 10 lines of code) given that developers’ fixes are typically small, usually in 3-5 lines of code. 
            In particular, our comparison is based on two criteria: 
            (1)~which fix is sound on any valid input; 
            (2)~if both are sound, which fix hurts less to model performance and utility (based on the span of imposed precondition, the larger span the less hurt). 
            Two authors independently classify the comparison results for each case and discuss  the results to reach a consensus.
            
            \paragraph{Results}
            We categorize the comparison results 
            as below.
            \begin{enumerate}[leftmargin=*,label=\Alph*]
                \item \emph{(30 cases) Better than developers' fixes or no available developer's fix.}
                Developers either propose no fixes or use heuristic fixes, such as reducing the learning rate or using the mean value to reduce the variance.
                These fixes may work in practice but are unsound, i.e., cannot rigorously guarantee the elimination of the numerical defect for any training or inference data.
                In contrast, \sysname generates better fixes since these fixes rigorously eliminate the defect.
                
                \item \emph{(7 cases) Equivalent to developers' fixes.}
                Developers and \sysname suggest equivalent or highly similar fixes.
                
                \item \emph{(13 cases) No need to fix.}
                For these cases, there is no need to fix the numerical defect in the given architecture.
                There are mainly three reasons. 
                (1)~The DNN is used in the whole project with fixed weights or test inputs.
                As a result, although the architecture contains defects, no  system failure can be caused.
                (2)~The architecture is injected a defect as a test case for automatic tools, such as a test architecture in the \CodeIn{TensorFuzz}~\cite{odena2019tensorfuzz} repository.
                (3)~The defect can be hardly exposed in practice. 
                For example, the defect is in a \CodeIn{Div} operator where the divisor needs to be very close to zero to trigger a divide-by-zero failure, but such situation hardly happens in practice since the divisor is randomly initialized.
                
                \item \emph{(3 cases) Inferior than developers' fixes or \sysname-generated fixes are impractical.}
                In two cases, \sysname-generated fixes are inferior to developers' fixes.
                Human developers observe that the defective operator is \CodeIn{Log}, and its input is non-negative.
                So they propose to add $10^{-6}$ to the input of \CodeIn{Log} as the fix.
                In contrast, \sysname can  generate only a clipping-based fix, e.g., clipping the input if it is less than $10^{-6}$. 
                When the input is small, \sysname's fix interrupts the gradient flow from output to input while the human's fix maintains it.
                As a result, the human's fix does less hurt to the model's trainability and is better than \sysname's fix.
                In another  case, the \sysname-generated fix imposes a small span for some model weights~(less than $0.1$ for each component of that weight node).
                Such a small weight span strongly limits the model's expressiveness and utility.
                We leave it as the future work to solve these limitations.
            \end{enumerate}
            
            From the comparison results, we can conclude that for the 40 cases where numerical defects are needed to be fixed~(excluding case C),  \sysname suggests equivalent or better fixes than human developers in 37 cases.
            Therefore, \sysname is comparably effective as human developers in terms of suggesting numerical-defect fixes, and is much more efficient since \sysname is an automatic approach.
            

            \paragraph{Guidelines for Users}
            We discuss two practical questions for \sysname users.
            (1)~\emph{Does \sysname hurt model utility, e.g., inference accuracy?}
            If no training or test data ever exposes a numerical defect, \sysname does not confirm a defect and hence no fix is  generated and there is no hurt to the utility. 
            If \sysname confirms numerical defects, whether the fix affects the utility depends on the precondition-imposing locations.
            If imposing locations can be freely selected, \sysname tends to impose the fix right before the vulnerable operator, and hence the fix does not reduce inference performance. 
            The reason is that the fix  changes~(by clipping) the input only when the input falls in the invalid range of the vulnerable operator. 
            In practice, if the imposing locations cannot be freely selected and need to follow developers’ requirements, our preceding empirical study shows that, in only 3 out of 40 cases, compared with developers' fixes, our fixes incur larger hurt to the inference or training performance of the architecture.
            (2)~\emph{Should we always apply \sysname to fix any architecture?}
            We can always apply \sysname to fix any architecture since \sysname fixes do not visibly alter the utility in most cases. 
            Nonetheless, in deployment, we recommend first using \sysname to confirm defect feasibility. 
            If there is no such failure-exhibiting system test, we may not need to fix the architecture; otherwise, we use \sysname to generate fixes.

    
\section{Related Work}


    \paragraph{Understanding and Detecting Defects in DNNs}
    Discovering and mitigating defects  and failures in DNN based systems is an important research topic~\cite{zhang2018empirical,pham2020problems,kloberdanz2022deepstability}.
    Following the taxonomy in previous work~\cite{humbatova2020taxonomy}, DNN defects are at four levels from bottom to top. 
    (1)~Platform-level defects. Defects can exist in real-world DL compilers and libraries. 
    Approaches exist for understanding, detecting, and testing against these defects~\cite{wang2020deep, DBLP:conf/sigsoft/ShenM0TCC21,tambon2021silent,xie2021leveraging}. 
    (2)~Architecture-level defects.  \emph{Our work focuses on numerical defects, being one type of architecture-level defects.} 
    Automatic detection and localization approaches~\cite{wardat2021deeplocalize,liu2021detecting} exist for other architecture-level defects such as suboptimal structure, activation function, and initialization and shape mismatch~\cite{hattori2020semi}.
    (3)~Model-level defects. Once a model is trained, its defects can be viewed as violations of desired properties as discussed by Zhang et al.~\cite{zhang2020machine}.
    Some example defects are correctness~\cite{tizpaz2020detecting,guerriero2021operation}, robustness~\cite{wang2021robot}, and fairness~\cite{zhang2020white} defects.
    (4)~Interface-level defects. DNN-based systems, when deployed as services, expose interaction interfaces to users where defects may exist, as shown by empirical studies on real-world systems~\cite{humbatova2020taxonomy,wan2021machine,wan2022automated}.
    
    \paragraph{Testing and Debugging for DNNs}
    A rich body of work exists for testing and debugging DNN defects~\cite{zhang2020machine}.
    Some representatives are DeepXplore~\cite{pei2017deepxplore} and DeepGauge~\cite{ma2018deepgauge}.
    Recent work enables automatic model debugging and repair via consistency checking~\cite{xiao2021self}, log checking~\cite{zhang2021autotrainer}, spectrum analysis~\cite{qi2021archrepair}, or analyzer-guided synthesis~\cite{sotoudeh2021provable}.
    
    \paragraph{DNN Static Analysis}
    Another solution for eliminating DNN defects is conducting static analysis to rigorously guarantee the non-existence of defects~\cite{li2020sok,albarghouthi2021introduction}.
    Although DNNs essentially conduct numerical computations, traditional tools of numerical analysis~\cite{gurfinkel2015seahorn,singh2017fast} are inefficient for DNN analysis due to lack of support for multi-dimensional tensor computations.
    Recently, static analysis tools customized for DNNs are emerging, mainly focusing on proposing tighter abstractions~\cite{gehr2018ai2,muller2022prima,paulsen2022linsyn} or incorporating abstractions into training to improve robustness~\cite{li2019robustra,mirman2018differentiable,zhang2021robustness}.
    Besides robustness, static analysis has also been applied to rigorously bound model difference~\cite{paulsen2020neurodiff}.
    Our approach includes a static analysis framework customized for numerical-defect detection and fixing.
    
    \paragraph{Detecting and Exposing Numerical Defects in DNNs}
    Despite the widespread existence of numerical defects in real-world DNN-based systems~\cite{zhang2018empirical,humbatova2020taxonomy,kloberdanz2022deepstability}, only a few automatic approaches exist for detecting and exposing these defects.
    To the best of our knowledge, DEBAR~\cite{zhang2020detecting} and GRIST~\cite{yan2021exposing} are the only two approaches.
    We discuss and compare \sysname with both approaches extensively in \Cref{sec:method,sec:exp}. 
\section{Conclusion}
    In this paper, we have presented a novel automatic approach  named \sysname for reliability assurance of DNNs against numerical defects.
    \sysname supports detection of potential numerical defects, confirmation
of potential-defect feasibility, and suggestion of defect fixes. 
     \sysname includes multiple novel extensions and optimizations upon existing tools, and includes three novel techniques.
    Our extensive evaluation on real-world DNN architectures has demonstrated high effectiveness and efficiency of \sysname
    compared to both the state-of-the-art approaches and developers' fixes.

    \paragraph{Data Availability}
    All artifacts including the tool source code and experiment logs are available and actively maintained at \CodeIn{\url{https://github.com/llylly/RANUM}}.
    \ifnum\arxiv=0
    The supplemental materials containing proofs, hyperparameters, and experiments are available in the arXiv version~\cite{li2023reliability}. 
    \fi

\section*{Acknowledgements}

    This work is sponsored by the National Natural Science Foundation of China under Grant No. 62161146003, the National Key Research and Development Program of China under Grant No. 2019YFE0198100, the Innovation and Technology Commission of HKSAR under Grant No. MHP/055/19, and the Tencent Foundation/XPLORER PRIZE.

\newpage
\balance
{
\small
\bibliographystyle{IEEEtranSN}
\bibliography{bib}
}

\clearpage

\appendix




\subsection{List of Supported Operators}

    \label{adxsec:supported-op}
    
    In this section, we provide a list of the 84 supported operators in the \sysname static analysis framework.
    These operators cover common operators used in DNNs. 
    To the best of our knowledge, \sysname provides abstractions for the largest number of operator types among existing DNN abstraction frameworks.
    We believe that the framework can be further extended to support other types of analysis, testing, and debugging for general DNN architectures or models such as robustness analysis and fairness testing.
    
    In particular, when compared to the state-of-the-art tool DEBAR~\cite{zhang2020detecting}, the underlined 17 operators are newly supported by \sysname.
    
    
    \noindent
    {\scriptsize \hspace{-0.3cm}
    \begin{tabular}{p{2.1cm}p{3.4cm}p{2.2cm}}
        \SmallCodeIn{Sub} & \SmallCodeIn{Add} & \SmallCodeIn{Mul} \\
        \SmallCodeIn{Div} & \SmallCodeIn{Pow} & \SmallCodeIn{MatMul} \\ 
        \underline{\SmallCodeIn{Gemm}} & \underline{\SmallCodeIn{MaxPool}} & \underline{\SmallCodeIn{GlobalMaxPool}} \\
        \SmallCodeIn{GlobalAveragePool} & \SmallCodeIn{Cos} & \SmallCodeIn{Sin} \\
        \SmallCodeIn{AveragePool} & \SmallCodeIn{Conv} & \SmallCodeIn{ConvTranspose} \\
        \underline{\SmallCodeIn{Pad}} & \underline{\SmallCodeIn{Reciprocal}} & \SmallCodeIn{Sqrt} \\
        \SmallCodeIn{Tanh} & \SmallCodeIn{Relu} & \SmallCodeIn{Softplus} \\
        \SmallCodeIn{LeakyRelu} & \underline{\SmallCodeIn{Softsign}} & \SmallCodeIn{Sigmoid} \\
        \SmallCodeIn{Neg} & \SmallCodeIn{Exp} & \SmallCodeIn{Log} \\
        \SmallCodeIn{Softmax} & \SmallCodeIn{LogSoftmax} & \SmallCodeIn{Abs} \\
        \SmallCodeIn{Ceil} & \SmallCodeIn{Floor} & \SmallCodeIn{Sign} \\
        \SmallCodeIn{Reshape} & \SmallCodeIn{Flatten} & \SmallCodeIn{Transpose} \\
        \SmallCodeIn{Shape} & \SmallCodeIn{Cast} & \SmallCodeIn{Slice} \\
        \underline{\SmallCodeIn{Gather}} & \underline{\SmallCodeIn{GatherND}} & \SmallCodeIn{Squeeze} \\
        \SmallCodeIn{Unsqueeze} & \underline{\SmallCodeIn{ScatterElements}} & \SmallCodeIn{Expand} \\
        \SmallCodeIn{Concat} & \SmallCodeIn{Split} & \SmallCodeIn{Identity} \\
        \SmallCodeIn{ConstantOfShape} & \underline{\SmallCodeIn{RandomNormalLike}} & \underline{\SmallCodeIn{RandomUniformLike}} \\
        \underline{\SmallCodeIn{RandomNormal}} & \underline{\SmallCodeIn{RandomUniform}} & \SmallCodeIn{Range} \\
        \SmallCodeIn{Constant} & \SmallCodeIn{Less} & \SmallCodeIn{LessOrEqual} \\
        \SmallCodeIn{Greater} & \SmallCodeIn{GreaterOrEqual} & \SmallCodeIn{Equal} \\
        \SmallCodeIn{Not} & \SmallCodeIn{Or} & \SmallCodeIn{Min} \\
        \SmallCodeIn{Max} & \SmallCodeIn{Clip} & \SmallCodeIn{Sum} \\
        \SmallCodeIn{ReduceMin} & \SmallCodeIn{ReduceMax} & \SmallCodeIn{ReduceSum} \\
        \SmallCodeIn{ReduceMean} & \SmallCodeIn{ArgMax} & \SmallCodeIn{ArgMin} \\
        \SmallCodeIn{Tile} & \underline{\SmallCodeIn{NegativeLogLikelihoodLoss}} & \underline{\SmallCodeIn{Loop}} \\
        \SmallCodeIn{SequenceInsert} & \SmallCodeIn{BatchNormalization} & \SmallCodeIn{OneHot} \\
        \underline{\SmallCodeIn{NonZero}} & \underline{\SmallCodeIn{Resize}} & \SmallCodeIn{ReduceProd} \\
        \SmallCodeIn{ReduceSumSquare} & \SmallCodeIn{IsInf} & \SmallCodeIn{IsNaN} \\
    \end{tabular}
    }

\subsection{List of Operators with Potential Numerical Defects}

    \label{adxsec:list-numerical-defects}
    
    In this section, we provide a full list of DNN operators that may contain numerical defects, along with their invalid ranges $\gI_{n_0,\mathsf{invalid}}$~(see definition of the numerical defect in \Cref{def:numerical-defect}), respectively.
    In the table, $U_{\min}$ and $U_{\max}$ stand for the minimum and maximum positive number of the input tensor's data type, respectively.
    
    {
    \centering \vspace{0.5em}
    \resizebox{\linewidth}{!}{
    \begin{tabular}{cc}
        \toprule
        Op. Type & $\gI_{n_0,\mathsf{invalid}}$ \\
        \hline
        \CodeIn{Pow} & $[-U_{\min}, U_{\min}] \times (-\infty, -U_{\min}]$ \\
        \CodeIn{Div} & $\sR \times [-U_{\min}, U_{\min}]$ \\
        \CodeIn{Reciprocal} & $[-U_{\min}, U_{\min}]$ \\
        \CodeIn{Sqrt} & $(-\infty, U_{\min}]$ \\
        \CodeIn{Exp} & $[\ln U_{\max}, \infty)$ \\
        \CodeIn{Log} & $(-\infty, U_{\min}]$ \\
        \CodeIn{Range} & $\sR \times \sR \times [-U_{\min}, U_{\min}]$ \\
        \CodeIn{NegativeLogLikelihoodLoss} & $[0,0]$ for number of non-zero cells using mean reduction \\
        \bottomrule
    \end{tabular}
    }
    }
    
\subsection{Detail Description of \sysname Static Analysis Framework}
    
    \label{adxsec:static-analysis}
    
    We present the omitted details in \Cref{subsec:static-analysis}.
    
    \subsubsection{Abstraction Domain and Characteristics}
    
        \label{adxsubsec:abstraction-domain}
    
        \begin{figure}[H]
            \centering
            \includegraphics[width=0.9\linewidth]{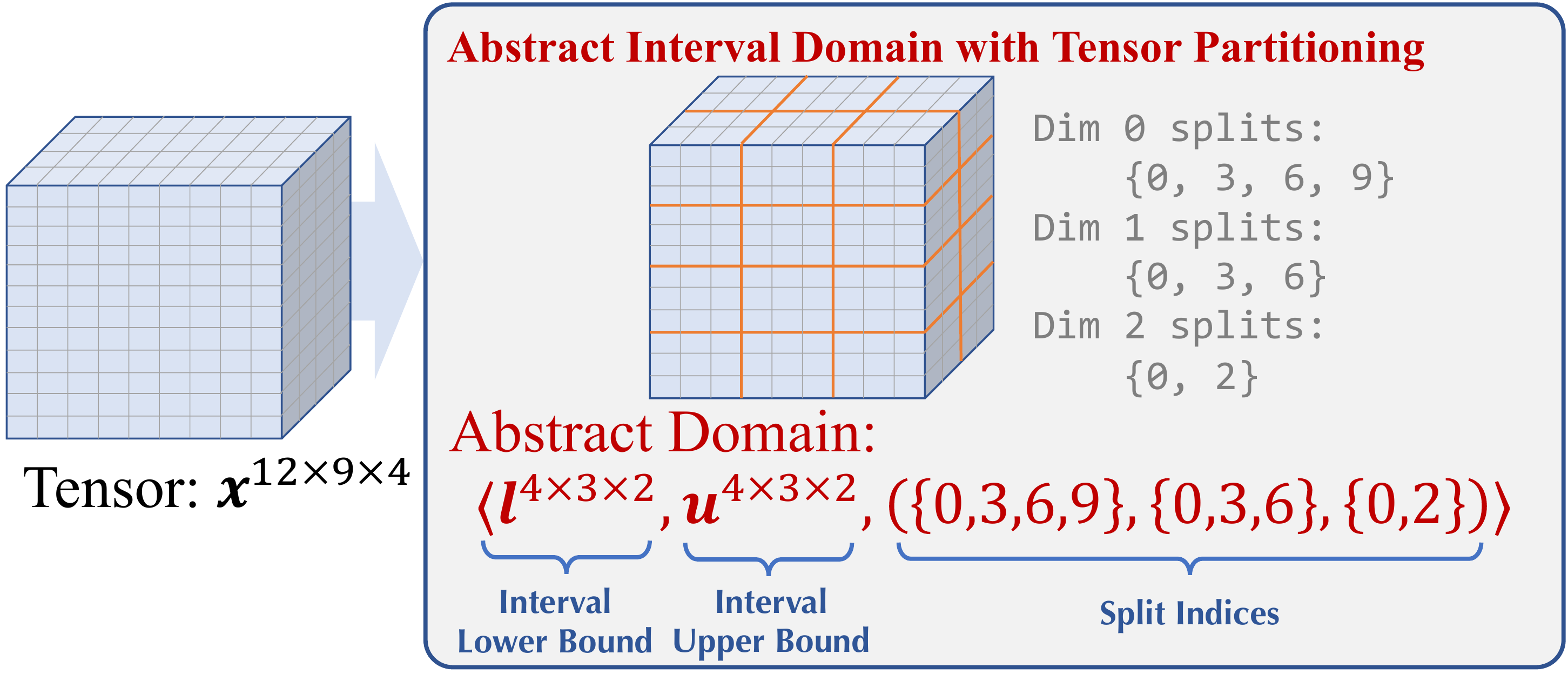}
            \caption{Example of tensor partitioning.
            Tensor partitioning reduces the size of tensors in the abstract domain by sharing one interval bound among all elements inside one subblock. }
            \label{fig:partitioning}
        \end{figure}
    
    We first formally define our abstraction domain: interval with tensor partitioning.
    Following the notation in abstract interpretation literature~\cite{cousot1977static}, suppose the tensor has $m$ dimensions with shape $n_1 \times n_2 \times \cdots \times n_m$, we define the abstract domain as such:
    \begin{equation}
        \begin{aligned}
            & \sA := \{ \langle \vl, \vu, (S_i)_{i=1}^m \rangle: \vl, \vu \in \sR^{n'_1 \times \cdots \times n'_m} \\
            & \hspace{5em} |S_i| = n'_i, \forall s\in S_i, s\in \sN, 0 \le s < n_i \}.
        \end{aligned}
        \label{eq:abs-domain}
    \end{equation}
    Each element in $\sA$ is a triplet $a = \langle \vl, \vu, (S_i)_{i=1}^m\rangle$, where the first two elements are subblock-wise interval lower bound and upper bound, respectively, and the last element $(S_1, S_2, \dots, S_m)$ contains $m$ sets, where each set $S_i$ corresponds to the \emph{0-indexed} split indices for the $i$-th dimension to form subblocks.
    We use $S_i[j] \in \sN$ to represent the $j$-th element of split index set $S_i$ sorted in ascending order and define $S_i[|S_i|] = n_i$.
    \Cref{fig:partitioning} illustrates this abstraction domain.
    
    To define the correspondence between the abstract domain and concrete domain $\sC := 2^{\sR^{n_1\times \cdots \times n_m}}$,
    we form Galois connection $\langle \sC, \subseteq\rangle \underset{\gamma}{\overset{\alpha}{\rightleftharpoons}} \langle \sA, \sqsubseteq \rangle$, where an abstraction function $\alpha: \sC \to \sA$ and the concretization function $\gamma: \sA \to \sC$ are defined as follows:
    \begin{subequations}
    \small
    \begin{equation}
        \begin{aligned}
            & \alpha(\gX) = \langle \vl, \vu, (S_i)_{i=1}^m \rangle \quad \text{where} \\
            & \hspace{2em} \vl_{i_1,i_2,\dots,i_m} = \min_{\vx\in \gX}  \min_{\substack{1\le k\le m\\ S_k[i_k] \le j_k < S_k[i_k + 1]}} \vx_{j_1,j_2,\dots,j_m}, \\
            & \hspace{2em} \vu_{i_1,i_2,\dots,i_m} = \max_{\vx\in \gX} \max_{\substack{1\le k\le m\\ S_k[i_k] \le j_k < S_k[i_k + 1]}} \vx_{j_1,j_2,\dots,j_m}, \\
        \end{aligned}
        \label{eq:alpha}
    \end{equation}
    \vspace{-1em}
    \begin{equation}
        \begin{aligned}
            \gamma(\langle \vl, \vu, (S_i)_{i=1}^m\rangle) = \{ \vx: &\vl_{i_1,i_2,\cdots,i_m} \le \vx_{j_1,j_2,\cdots,j_m} \le \vu_{i_1,i_2,\cdots,i_m}, \\
            & S_k[i_k] \le j_k < S_k[i_k+1], 1\le k\le m \}.
        \end{aligned}
        \label{eq:gamma}
    \end{equation}
    \end{subequations}
    In \Cref{eq:alpha}, the split indices $(S_i)_{i=1}^m$ can be arbitrarily chosen but need to be kept consistent with those in \Cref{eq:gamma}.
    Take \Cref{fig:partitioning} as the example, to abstract a set of tensors $\gX$ with shape $12\times 9 \times 4$, we define split indices for each dimension, respectively, then impose interval constraints on each subblock of the tensor.
    For example, $[\vl_{0,0,0}, \vu_{0,0,0}]$ constrain any element in $\vx_{0:2,0:2,0:1}$, $[\vl_{3,2,1}, \vu_{3,2,1}]$ constrain any element in $\vx_{9:11,6:8,2:3}$. 
    
    \newpage
    
    \textbf{Abstraction Characteristics.}
    There are multiple ways to compute the abstractions, and we design our particular computational algorithms to achieve soundness, tightness, and differentiability.
    \noindent(1)~\textbf{Soundness}: We guarantee that all abstractions are sound.
    Formally, suppose $a_\vx$ and $a_\vw$ are the abstractions for input and weights, respectively, we guarantee $f_n^*(\gamma(a_\vx); \gamma(a_\vw)) \subseteq \gamma(T_n^*(a_\vx; a_\vw))$ where $* \in \{\mathsf{in}, \mathsf{out}\}$, $n$ is any node in the architecture, and $T_n^*(a_\vx; a_\vw)$ is our computed abstract domain for input or output of node $n$.
    The soundness property theoretically guarantees that our detection approach has no false negatives, i.e., flag all potential numerical defects, and the validity of generated preconditions~(see \Cref{subsec:approach-overview}).\\
    (2)~\textbf{Tightness}: For most operators in DNN, given abstractions for its inputs, we compute the tightest possible interval abstraction for the output.
    Formally, for any atomic DNN operator $op$ and any abstract domain $i \in \sA$ of $op$'s input, if $op(\gamma(i)) \subseteq [\vl, \vu]$, i.e., $[\vl, \vu]$ is an interval abstraction of $op$'s output, then $\gamma(T_{op}(i)) \subseteq [\vl, \vu]$, i.e., our generated abstract domain $T_{op}(i)$ is always tighter or equal to any interval abstraction $[\vl, \vu]$.
    Such tightness can reduce false positives for detecting potential numerical defects and increase the span of generated failure-exhibiting intervals and fix intervals, which improves the quality of generated tests and preconditions.\\
    (3)~\textbf{Differentiability}: We compute differentiable abstractions.
    Concretely, if the input of node $n_2$ is deferentially dependent on output of node $n_1$, we can compute out gradients $\nabla_i o$ for any $i \in \{\vl_{n_1}, \vu_{n_1}\}$ and any $o \in \{\vl_{n_2}, \vu_{n_2}\}$, where $\langle \vl_{n_1}, \vu_{n_1}, (S^{n_1}_i)_{i=1}^{\dim(n_1.\mathsf{out})} \rangle$ and $\langle \vl_{n_2}, \vu_{n_2}, (S^{n_2}_i)_{i=1}^{\dim(n_2.\mathsf{in})} \rangle$ are abstractions of $\fout_{n_1}(\cdot; \cdot)$ and $\fin_{n_2}(\cdot; \cdot)$, respectively.
    When tightness and differentiability cannot be achieved at the same time~(e.g., for \CodeIn{floor} operator, tight abstraction $\langle \vl, \vu\rangle \mapsto \langle \lfloor \vl \rfloor, \lfloor \vu \rfloor \rangle$ is not generally differentiable, and differentiable abstraction $\langle \vl, \vu\rangle \mapsto \langle \vl - \mathbf{1}, \vu \rangle$ is not tight), we implement two abstract algorithms to achieve tightness and differentiability, respectively.
    
    The existing approach DEBAR~\cite{zhang2020detecting} also proposes a static analysis framework for DNN architectures with tensor partitioning.
    In contrast to our framework, the abstract domain in DEBAR contains affine equalities besides interval domains.
    Therefore, DEBAR can be tighter than ours in some cases and can produce fewer positives, but due to the additional complexity of affine equalities, DEBAR tends to use the coarsest abstraction~(i.e., tensor partitioning) granularity and supports fewer operators than ours.
    At the same time, our algorithms produce the tightest interval abstractions while DEBAR has no tightness guarantee.
    As a result, \sysname detects more true numerical defects and has a comparable number of false positives~(see \Cref{subsec:static-detector-exp}), and we can leverage the feasibility confirmation support in \sysname to filter out false positives.

    \subsubsection{Initial Abstraction Construction with \textbf{Backward Fine-Grained Node Labeling}}
        \label{adxsubsec:backward-labeling}
        
        We construct abstract domains for initial nodes in two steps:
        First, we determine the tensor partitions, i.e., the split index sets $(S_i)_{i=1}^m$~(see \Cref{eq:abs-domain}), which determine the tightness and efficiency of our static analysis framework, because the tensor partitions of all other nodes will be solely dependent on the partitions of initial nodes as we will show in \Cref{adxsubsec:tight-abstract-dynamic-partitioning}.
        Second, we compute the interval bounds $\vl$ and $\vu$.
        
        We use the following principle to decide the tensor partitions:
        For nodes that are connected to operators requiring fine-grained abstractions~(e.g., the \CodeIn{shape} input for operator \CodeIn{Reshape}) with valid paths~(which we will specify later), we construct tensor partitions with the finest granularity, i.e., $S_i = \{0,1,\dots,n_i\}$.
        We call these nodes fine-grained initial nodes and starting from the next paragraph we introduce our novel technique of \emph{backward fine-grained node labeling} to find them out.
        For other nodes, we rely on downstream task requirements and user specifications to determine the partitions.
        For example, for fix generation~(see \Cref{subsec:precond-gen}), we use the coarsest granularity by default.
        
        \textbf{Backward Fine-Grained Node Labeling.}
        The fine-grained initial nodes are those starting a valid path in DNN computational graph $\gG$, where a path is valid if and only if the path does not traverse through fine-grained stopping operators and terminates at a fine-grained requiring operator with some specific input indices.
        As discussed in \Cref{subsec:static-analysis}, the fine-grained requiring operators fall into three categories: control-flow operators~(e.g., \CodeIn{Loop}), indexing operators~(e.g., \CodeIn{Slice}), and shaping operators~(e.g., \CodeIn{Reshape}). 
        Fine-grained stopping operators are those whose output is independent of input abstraction granularity~(so the granularity of preceding nodes does not matter).
        Detail lists of fine-grained stopping operators and fine-grained requiring operators are provided in \Cref{adxsec:fine-grain-stop-requiring-operators}.
        
        To find out fine-grained initial nodes, we propose backward fine-grained node labeling, which is similar to data dependency analysis in traditional programs:
        First, we invert all edges of the given computational graph $\gG = \langle \gV, \gE \rangle$ to get $\gG' = \langle \gV, \gE' \rangle$.
        Second, we attach a boolean label for each node $n\in \gV$ and initialize them with \CodeIn{False}.
        Third, we do topology sorting on $\gG'$.
        When encountering a node $n$ with \CodeIn{True}, if the node does not contain a fine-grained stopping operator, we propagate this label to all its subsequent nodes; otherwise, we propagate \CodeIn{False}.
        When encountering a node $n$ with \CodeIn{False}, if the node is a fine-grained requiring operator, we propagate \CodeIn{True} to some subsequent nodes corresponding to specific input indices and \CodeIn{False} to others; otherwise, propagate \CodeIn{False} to all subsequent nodes.
        Lastly, we collect all labels for initial nodes. 
        The nodes with \CodeIn{True} label are fine-grained initial nodes.
        
        To this point, we have determined the tensor partitions $(S_i)_{i=1}^m$ for all initial nodes, and we now compute $\vl$ and $\vu$ and thus finish the abstraction domain construction for the valid ranges of initial nodes.
        Initial nodes are further divided into input, weight, and constant nodes.
        In practice, most weight nodes and all constant nodes have their initial values stored in the ONNX file, and we directly use \Cref{eq:alpha} with these initial values to compute $\vl$ and $\vu$.
        Otherwise, we rely on user specifications and built-in heuristics to determine $\vl$ and $\vu$ for initial nodes.
    
    \subsubsection{Internal Abstraction with \textbf{Dynamic Partitioning}}
        \label{adxsubsec:tight-abstract-dynamic-partitioning}
        
        For all supported operator types~(listed in \Cref{adxsec:supported-op}), we propose concrete algorithms to compute abstract domains for output with dynamic tensor partitioning.
        Formally, for each operator $op$, we construct the computable function $T_{op}: \sA \to \sA$ that satisfies soundness and tentatively satisfies tightness and differentiability, where $\sA$ is the abstract domain defined in \Cref{eq:abs-domain}.
        Therefore, following the computational procedure introduced at the end of \Cref{subsec:static-analysis}, we can compute end-to-end abstractions for all nodes in the given DNN architecture.
        As the showcase of our abstraction algorithm, we describe the algorithms for four representative operators: \CodeIn{MatMul}, \CodeIn{Conv}, \CodeIn{Softmax}, and \CodeIn{Loop}.
        
        \paragraph{\CodeIn{MatMul}}
        The \CodeIn{MatMul} operator computes the matrix multiplication of two operands.
        This operator is widely used in DNNs to express fully-connected layers.
        To simplify the narration, we focus on the two-dimensional case where we compute $op(\mA, \mB) = \mC = \mA \mB$ with $\mA \in \sR^{n \times m}$ and $\mB \in \sR^{m \times l}$. 
        Extensions to other dimensions by abstracting the broadcasting mechanism can be found in our open-source implementation.
        
        We denote the input abstractions of $op$ by $a = \langle \mL_a, \mU_a, (S_A^1, S_A^2) \rangle$ and $b = \langle \mL_b, \mU_b, (S_B^1, S_B^2) \rangle$, respectively.
        First, we compute the union $U = S_A^2 \cup S_B^1$.
        Second, we dynamically partition both $a$ and $b$ with split points $(S_A^1, U)$ and $(U, S_B^2)$, respectively, and get
        $a' = \langle \mL'_a, \mU'_a, (S_A^1, U) \rangle$ and $b' = \langle \mL'_b, \mU'_b, (U, S_B^2) \rangle$.
        Note that $a$ and $a'$~(or $b$ and $b'$) correspond to the same concrete domain, but $a'$ and $b'$ have finer or equal partition granularity than $a$ and $b$.
        Third, we compute output abstraction $T_{op}(a,b) = c = \langle \mL_c, \mU_c, (S_A^1, S_B^2) \rangle$
        \begin{equation}
            \small
            \label{eq:mat-abs-1}
            \begin{aligned}
                \text{where} \quad (\mL_c)_{ij} & = \sum_{k=1}^{|U|} \vv_k \min_{\mA \in \{\mL'_a, \mU'_a\}, \mB \in \{\mL'_b, \mU'_b\}} \mA_{ik}\mB_{kj}, \\
                (\mU_c)_{ij} & = \sum_{k=1}^{|U|} \vv_k \max_{\mA \in \{\mL'_a, \mU'_a\}, \mB \in \{\mL'_b, \mU'_b\}} \mA_{ik}\mB_{kj}, \\
                \vv_k & = U[k]-U[k-1].
            \end{aligned}
        \end{equation}
        This formulation can guarantee tightness but is not efficient for tensor computation due to the inner minimum and maximum.
        Therefore, we also implement a fast-mode abstraction trading tightness for efficiency: $T_{op}'(a,b) = c' = \langle \mL_c', \mU_c', (S_A^1, S_B^2)\rangle$ where
        \begin{equation}
            \small
            \begin{aligned}
                \mL_c^{'} = &
                \mU_a^{'-} \vv \mU_b^{'-} + 
                \mU_a^{'0} \vv \mL_b^{'-} + 
                \mU_a^{'+} \vv \mL_b^{'-} + \\
                & \mL_a^{'-} \vv \mU_b^{'0} +
                \mU_a^{'0} \vv \mL_b^{'0} +
                \mL_a^{'0} \vv \mU_b^{'0}
                 +
                \mU_a^{'+} \vv \mL_b^{'0} + \\
                & \mL_a^{'-} \vv \mU_b^{'+} + 
                \mL_a^{'0} \vv \mU_b^{'+} +
                \mL_a^{'+} \vv \mL_b^{'+}, \\
                \mU_c^{'} = & 
                \mL_a^{'-} \vv \mL_b^{'-} +
                \mL_a^{'0} \vv \mL_b^{'-} +
                \mL_a^{'+} \vv \mU_b^{'-} + \\
                & \mL_a^{'-} \vv \mL_b^{'0} +
                \mL_a^{'0} \vv \mL_b^{'0} +
                \mU_a^{'0} \vv \mU_b^{'0}
                 +
                \mU_a^{'+} \vv \mU_b^{'0} + \\
                & \mU_a^{'-} \vv \mL_b^{'+} +
                \mU_a^{'0} \vv \mU_b^{'+} +
                \mU_a^{'+} \vv \mU_b^{'+}.
            \end{aligned}
            \label{eq:mat-abs-2}
        \end{equation}
        In the above equation, for any $* \in \{a, b\}$, let $\circ$ be the elementwise~(Hadamard) product,
        \begin{equation}
            \small
            \begin{aligned}
                \mL_*^{'-} = & \mL_*^{'} \circ \1[\mU_*^{'} < 0],
                \mU_*^{'-} = \mU_*^{'} \circ \1[\mU_*^{'} < 0], \\
                \mL_*^{'0} = & \mL_*^{'} \circ \1[\mL_*^{'} \le 0, \mU_*^{'} \ge 0],
                \mU_*^{'0} = \mU_*^{'} \circ \1[\mL_*^{'} \le 0, \mU_*^{'} \ge 0], \\
                \mL_*^{'+} = & \mL_*^{'} \circ \1[\mL_*^{'} > 0],
                \mU_*^{'+} = \mU_*^{'} \circ \1[\mL_*^{'} > 0].
            \end{aligned}
            \label{eq:mat-abs-2-defs}
        \end{equation}
        From \Cref{eq:mat-abs-2}, we can observe that the abstraction can be easily implemented with tensor computations.
        In \Cref{adxsec:proofs}, we prove the soundness and tightness of these abstractions.
        
        \paragraph{\CodeIn{Conv}}
        The \CodeIn{Conv} operator computes the discrete convolution of two operands.
        This operator is widely used in convolutional neural networks~(CNNs, \cite{krizhevsky2012imagenet}).
        To simplify the narration, we focus on the single-channel single-stride two-dimensional case where we compute $op(\mA, \mW) = \mC$ with $\mA$ being the input matrix and $\mW$ being the convolution kernel. 
        Extensions to general cases are provided in our open-source implementation.
        
        We first dynamically split the kernel abstraction to the finest granularity.
        Second, we compute the receptive field, which is the sub-region of $\mA$ that decides each output position, of each output position.
        Third, we inspect the alignment between receptive fields and $\mA$'s partitions.
        If neighboring output positions have their receptive fields partitioned in the same sub-block of $\mA$, it means that these positions can be abstracted by a single interval, i.e., these positions can be partitioned together.
        Fourth, using this principle, we derive the partitions of the output tensor, and repeat the input tensors accordingly and efficiently so that the abstract computation can be written as convolutional operations.
        Last, we modify the abstraction computation equations in \Cref{eq:mat-abs-2} by replacing matrix multiplication with convolution to compute the abstraction of output $\mC$.
        
        \paragraph{\CodeIn{Softmax}}
        The \CodeIn{Softmax} operator computes normalized exponential values for the given input.
        \CodeIn{Softmax} is widely deployed for classification tasks to yield normalized confidence.
        In one-dimensional case, for input $\vx \in \sR^n$, a \CodeIn{Softmax} operator outputs $op(\vx) = \frac{\exp(\vx)}{\sum_{i=1}^n \exp(\vx)_i}$.
        The output abstraction of \CodeIn{Softmax} operator $op$ can be thus computed: $T_{op}(\langle \vl, \vu, (S)\rangle) = \langle \vl^o, \vu^o, (S)\rangle$ where
        \begin{equation}
            \small
            \resizebox{0.9\linewidth}{!}{$ \displaystyle
            \begin{aligned}
                \vl^o_i & = \dfrac{\exp(\vl_i)}{\exp(\vu)^\T \vv - \exp(\vu_i) + \exp(\vl_i)}, 
                \vu^o_i = \dfrac{\exp(\vu_i)}{\exp(\vl)^\T \vv - \exp(\vl_i) + \exp(\vu_i)}, \\
                \vv_k & = S[k] - S[k-1].
            \end{aligned}
            $}
            \label{eq:softmax-abs}
        \end{equation}
        The output abstraction's partition is dynamically decided by the input abstraction's partition.
        We prove the soundness and tightness of the abstraction in \Cref{adxsec:proofs}.

        \paragraph{\CodeIn{Loop}}
        The node with a \CodeIn{Loop} operator contains a sub-computational graph with dynamic controlling counters and conditions to represent a runtime loop.
        The \CodeIn{Loop} operator can be used to represent recurrent neural networks~(RNNs) and handle the input sequence of variable length.
        We require the controlling counter and loop termination condition to have the finest abstraction granularity.
        Then, we recursively apply our static analysis framework to the sub-graph representing the loop body and update the interval of the loop counter iteratively.
        When the abstraction interval of the loop counter can explicitly decide whether to terminate the loop, we continue or terminate the loop iterations, respectively;
        otherwise, we merge the current abstraction interval with the interval obtained after another iteration.
        We repeat this process until termination.
        Theoretically, this execution process cannot guarantee the termination, i.e., the loop body may execute for infinite times.
        However, in practice, on our dataset, our analysis framework already suffices to guarantee the termination in all cases.
        In the future, we can apply a widening operation if termination cannot be tightly abstracted. 
        
        \begin{remark}
            As we can see, the novel dynamic partition technique is incorporated into the computation process of each operator's abstraction.
            The soundness and tightness of our designed abstractions are immediately achieved by design, and the differentiability of our designed abstractions can be implemented via the auto-differentiation functionality of  popular DL libraries like \CodeIn{PyTorch}~\cite{paszke2019pytorch} and \CodeIn{Tensorflow}~\cite{tensorflow2015-whitepaper} where we use \CodeIn{PyTorch} for implementation. 
            The abstractions for DNNs are also implemented for other applications, e.g., for robustness verification~\cite{singh2019abstract,mirman2018differentiable}.
            However, our abstractions are tailored for the tensor-partitioned interval domain that is particularly suitable for testing and debugging the numerical defects.
            To the best of our knowledge, these abstractions are the first that achieve soundness, tightness, and differentiability.
        \end{remark}
    
    \subsubsection{List of Fine-Grained Requiring and Stopping Operators}
    
        \label{adxsec:fine-grain-stop-requiring-operators}
        
        We use fine-grained requiring and fine-grained stopping operators in \Cref{adxsubsec:backward-labeling}.
        Among all supported operators, the fine-grained requiring operators are 
        \vspace{0.5em}
        
        \begin{tabular}{p{8.1cm}}
            \CodeIn{Reshape} (input \#2), 
            \CodeIn{Slice} (input \#2, \#3, \#4, \#5), 
            \CodeIn{Squeeze} (input \#2), 
            \CodeIn{Unsqueeze} (input \#2), 
            \CodeIn{Tile} (input \#2, \#3), 
            \CodeIn{Loop} (input \#1, \#2),
            \CodeIn{SequenceInsert} (input \#3),
            \CodeIn{ConstantOfShape} (input \#1),
            \CodeIn{Gather} (input \#2),
            \CodeIn{GatherND} (input \#2),
            \CodeIn{ReduceSum} (input \#2),
            \CodeIn{ScatterElements} (input \#2),
            \CodeIn{Expand} (input \#2),
            \CodeIn{Split} (input \#2),
            \CodeIn{Pad} (input \#2, \#3),
            \CodeIn{NegativeLogLikelihoodLoss} (input \#2),
            \CodeIn{Clip} (input \#2, \#3),
            \CodeIn{OneHot} (input \#2),
            \CodeIn{Resize} (input \#2, \#3, \#4).
        \end{tabular}
        \vspace{0.5em}

        \noindent
        The fine-grained stopping operators are 
        \vspace{0.5em}
        
        \begin{tabular}{p{8.1cm}}
            \CodeIn{Shape}, \CodeIn{RandomNormalLike}, \CodeIn{RandomUniformLike}.
        \end{tabular}
        
        \vspace{0.5em}
            
    \subsubsection{Proofs}
        \label{adxsec:proofs}
        \allowdisplaybreaks
        
        Here we present the omitted proofs in \Cref{adxsubsec:tight-abstract-dynamic-partitioning}.
        
        \paragraph{\CodeIn{MatMul}}
        
        \begin{theorem}[Tightness of Abstraction by \Cref{eq:mat-abs-1} for \CodeIn{MatMul}]
            Suppose $op$ is the \CodeIn{MatMul} operator and $T_{op}$ is as defined in \Cref{eq:mat-abs-1}, then if $op(\gamma(a), \gamma(b)) \subseteq [\mL, \mU]$ for $a, b \in \sA$, $\gamma(T_{op}(a, b)) \subseteq [\mL, \mU]$.
            \label{thm:1}
        \end{theorem}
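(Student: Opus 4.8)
The plan is to establish tightness by proving the stronger fact that each endpoint produced by \Cref{eq:mat-abs-1} is \emph{exactly attained} by some concrete pair of matrices in $\gamma(a)\times\gamma(b)$; once attainment is in hand, the inclusion $\gamma(T_{op}(a,b))\subseteq[\mL,\mU]$ drops out of the hypothesis in one line. First I would unfold the concretization: since $T_{op}(a,b)=\langle \mL_c,\mU_c,(S_A^1,S_B^2)\rangle$ is a partitioned box, the claim $\gamma(T_{op}(a,b))\subseteq[\mL,\mU]$ is equivalent to the subblock-wise scalar inequalities $(\mL_c)_{pq}\ge \mL_{ij}$ and $(\mU_c)_{pq}\le \mU_{ij}$ for every output coordinate $(i,j)$ lying in the subblock indexed by $(p,q)$ (where $p$ indexes the $S_A^1$ partition and $q$ the $S_B^2$ partition). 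Thus it suffices to control each endpoint separately.

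The heart of the argument is the observation that, after refining to the common partition $U=S_A^2\cup S_B^1$, every column of $\mA$ falling in a fixed $U$-subblock $k$ shares the bounds $[(\mL'_a)_{pk},(\mU'_a)_{pk}]$ and every matching row of $\mB$ shares $[(\mL'_b)_{kq},(\mU'_b)_{kq}]$, and---crucially---the individual entries within a subblock may be chosen \emph{independently}. Writing $(\mA\mB)_{ij}=\sum_k \sum_{t\in\text{block }k}\mA_{it}\mB_{tj}$ and minimizing term by term, I would note that the minimum of the scalar product over an axis-aligned box is attained at a corner, namely $\min_{\mA\in\{\mL'_a,\mU'_a\},\,\mB\in\{\mL'_b,\mU'_b\}}\mA_{pk}\mB_{kq}$; since the $\vv_k$ entries of block $k$ are free to realize this corner simultaneously, the block contributes exactly $\vv_k$ copies of it. Summing over $k$ recovers precisely the right-hand side of \Cref{eq:mat-abs-1}, so $(\mL_c)_{pq}=\min_{\mA\in\gamma(a),\mB\in\gamma(b)}(\mA\mB)_{ij}$, and symmetrically $(\mU_c)_{pq}$ is the maximum. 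I would make attainment explicit by a single witness pair $\mA^\star,\mB^\star$: set the entries of row $i$ of $\mA^\star$ and column $j$ of $\mB^\star$ to the corner values realizing each block extremum, and fill all remaining entries arbitrarily within their bounds; because $(\mA^\star\mB^\star)_{ij}$ depends only on those entries, it equals the claimed endpoint.

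With attainment established, tightness is immediate. By the hypothesis $op(\gamma(a),\gamma(b))\subseteq[\mL,\mU]$ we have $\mL_{ij}\le(\mA^\star\mB^\star)_{ij}=(\mL_c)_{pq}$ and $(\mU_c)_{pq}=(\mA^{\star\star}\mB^{\star\star})_{ij}\le\mU_{ij}$ for the respective witnesses, which are exactly the subblock-wise inequalities identified in the first step; hence $\gamma(T_{op}(a,b))\subseteq[\mL,\mU]$.

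I expect the main obstacle to be the bookkeeping around the common refinement $U$ and the independence of within-block entries. One must verify that refining $a$ to $a'$ and $b$ to $b'$ leaves the concretizations unchanged, so no concrete pair is lost in the reduction, and that the coordinate $(i,j)$ inside a subblock genuinely does not affect the endpoint, so a single shared interval soundly and tightly covers the whole block. The corner-attainment of a bilinear form on a box and the independence of entries sharing a bound are the two facts that legitimize the per-term minimization; once they are pinned down, matching the summation identity to \Cref{eq:mat-abs-1} is routine.
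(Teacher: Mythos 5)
Your proof is correct and follows essentially the same route as the paper's: both arguments rest on exhibiting witness matrices in $\gamma(a)\times\gamma(b)$ that exactly attain each endpoint of \Cref{eq:mat-abs-1} (using corner-attainment of the bilinear terms and the partition-refinement identity), and then invoking the hypothesis $op(\gamma(a),\gamma(b))\subseteq[\mL,\mU]$ on those witnesses. Your treatment is somewhat more explicit about the independence of within-block entries and the preservation of concretizations under refinement, which the paper handles implicitly in its summation identity, but the underlying argument is the same.
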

        
        \begin{proof}
            We use $\mL_a$ and $\mU_a$ to denote the element-wise interval lower and upper bounds of $\gamma(a)$; and use $\mL_b$ and $\mU_b$ to denote these bounds of $\gamma(b)$, respectively, where a formal definition is in \Cref{eq:gamma}.
            Then, there exist $\mA$ and $\mB$ with $\mL_a \le \mA \le \mU_a$ and $\mL_b \le \mB \le \mU_b$, such that
            \begin{small}
                \begin{align}
                    (\mA\mB)_{ij} & = \sum_{k=1}^l \min \{ \mL_{a,ik} \mL_{b,kj}, \mL_{a,ik} \mU_{b,kj}, \nonumber \\
                    & \hspace{5em} \mU_{a,ik} \mL_{b,kj}, \mU_{a,ik} \mU_{b,kj} \}, \label{eq:pf-1} \\
                    \text{or } (\mA\mB)_{ij} & = \sum_{k=1}^l \max \{ \mL_{a,ik} \mL_{b,kj}, \mL_{a,ik} \mU_{b,kj}, \nonumber \\
                    & \hspace{5em} \mU_{a,ik} \mL_{b,kj}, \mU_{a,ik} \mU_{b,kj} \}. \label{eq:pf-2} 
                \end{align}
            \end{small}
            By definition, we have
            \begin{equation}
                \small
                \mL_{ij} \le \text{\Cref{eq:pf-1}}, \mU_{ij} \ge \text{\Cref{eq:pf-2}}.
            \end{equation}
            We let $\mL'_c$ and $\mU'_c$ to denote the element-wise interval lower and upper bounds for $\gamma(T_{op}(a,b))$, 
            let $S_A^1[i'] \le i-1 < S_A^1[i'+1]$ and $S_B^2[j'] \le j-1 < S_B^2[j'+1]$,
            then from \Cref{eq:mat-abs-1},
            \begin{equation}
                \small
                \begin{aligned}
                & (\mL'_c)_{ij} = (\mL_c)_{i'j'} \\
                = & \sum_{k=1}^{|U|} \sum_{k'=U[k-1]+1}^{U[k]} \min \{ \mL_{a,ik'} \mL_{b,k'j}, \mL_{a,ik'} \mU_{b,k'j}, \\
                & \hspace{10em} \mU_{a,ik'} \mL_{b,k'j}, \mU_{a,ik'} \mU_{b,k'j} \} \\
                = & \sum_{k=1}^l \min \{ \mL_{a,ik} \mL_{b,kj}, \mL_{a,ik} \mU_{b,kj}, \mU_{a,ik} \mL_{b,kj}, \mU_{a,ik} \mU_{b,kj} \} \\
                = & \text{\Cref{eq:pf-1}} \ge \mL_{ij}.
                \end{aligned}
                \label{eq:pf-5}
            \end{equation}
            Similarly, $(\mU'_c)_{ij} \le \mU_{ij}$.
            Thus, $\gamma(T_{op}(a,b)) \subseteq [\mL, \mU]$.
        \end{proof}
        
        \begin{theorem}[Soundness of Abstraction by \Cref{eq:mat-abs-1} for \CodeIn{MatMul}]
            Suppose $op$ is the \CodeIn{MatMul} opeartor and $T_{op}$ is as defined in \Cref{eq:mat-abs-1}, then $op(\gamma(a), \gamma(b)) \subseteq \gamma(T_{op}(a,b))$.
            \label{thm:2}
        \end{theorem}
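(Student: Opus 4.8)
The plan is to verify soundness directly at the level of individual output entries, reusing the block-to-element aggregation identity already established in the proof of \Cref{thm:1}. It is worth noting at the outset that soundness does \emph{not} follow formally from tightness: \Cref{thm:1} bounds $\gamma(T_{op}(a,b))$ from above by every valid interval abstraction, whereas here we need the reverse containment $op(\gamma(a),\gamma(b)) \subseteq \gamma(T_{op}(a,b))$. A separate argument is therefore required, and together the two theorems will show that $\gamma(T_{op}(a,b))$ is exactly the tightest interval box enclosing the product set.

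First I would fix arbitrary $\mA \in \gamma(a)$ and $\mB \in \gamma(b)$ and, using the notation of \Cref{thm:1}, record the element-wise bounds $\mL_{a,ik} \le \mA_{ik} \le \mU_{a,ik}$ and $\mL_{b,kj} \le \mB_{kj} \le \mU_{b,kj}$ coming from \Cref{eq:gamma}. It suffices to show $(\mL'_c)_{ij} \le (\mA\mB)_{ij} \le (\mU'_c)_{ij}$ for every $(i,j)$, since this is precisely the membership condition for $\mA\mB \in \gamma(T_{op}(a,b))$. The key step is termwise interval arithmetic: for each contraction index $k$, the product $\mA_{ik}\mB_{kj}$ of two bounded scalars lies between the minimum and the maximum of the four corner products $\{\mL_{a,ik}\mL_{b,kj}, \mL_{a,ik}\mU_{b,kj}, \mU_{a,ik}\mL_{b,kj}, \mU_{a,ik}\mU_{b,kj}\}$, the standard fact that the product of two intervals is spanned by their endpoint products. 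Summing the lower bounds over $k=1,\dots,l$ gives $\sum_k \min\{\cdots\} \le (\mA\mB)_{ij}$, and symmetrically $(\mA\mB)_{ij} \le \sum_k \max\{\cdots\}$. By the aggregation identity already derived in \Cref{eq:pf-5}, the left-hand sum equals $(\mL'_c)_{ij}$ exactly (the dynamic repartition along $U = S_A^2 \cup S_B^1$ merely regroups the $\min$ terms without changing their value, yielding \Cref{eq:pf-1}), and the analogous computation gives $\sum_k \max\{\cdots\} = (\mU'_c)_{ij}$. Chaining these inequalities completes the entry-wise verification, and hence $op(\gamma(a),\gamma(b)) \subseteq \gamma(T_{op}(a,b))$.

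The main obstacle is bookkeeping rather than conceptual: I must ensure that the termwise bounds, naturally indexed by the finest contraction index $k' \in \{1,\dots,l\}$, are correctly regrouped into the coarse blocks induced by the split set $U$ so that they match the block-level quantities $(\mL_c)_{i'j'}$ and $(\mU_c)_{i'j'}$ of \Cref{eq:mat-abs-1}. This is exactly the rewriting carried out in \Cref{eq:pf-5}, so the soundness argument largely reduces to invoking that identity in the opposite order. The only point needing a line of care is that the $\min$ (resp.\ $\max$) of the sum over $k$ decomposes blockwise, which holds because the partition couples no two distinct contraction indices; once this is observed, the proof is immediate.
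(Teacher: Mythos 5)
Your proposal is correct and follows essentially the same route as the paper's proof: both apply termwise interval arithmetic (bounding each $\mA_{ik}\mB_{kj}$ by the four corner products), sum over the contraction index, and then invoke the aggregation identity from the proof of \Cref{thm:1} (\Cref{eq:pf-5}) to identify those element-wise sums with the block-level bounds computed by $T_{op}$.
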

        
        \begin{proof}
            We use $\mL_a$ and $\mU_a$ to denote the element-wise interval lower and upper bounds of $\gamma(a)$; and use $\mL_b$ and $\mU_b$ to denote these bounds of $\gamma(b)$, respectively, where a formal definition is in \Cref{eq:gamma}.
            For any $\mA$ and $\mB$ such that $\mL_a \le \mA \le \mU_a$ and $\mL_b \le \mB \le \mU_b$, 
            \begin{equation}
                \small
                \begin{aligned}
                    (\mA\mB)_{ij} & = \sum_{k=1}^l \mA_{ik} \mB_{kj} \\
                    & \ge \min\{ \mL_{a,ik} \mL_{b,kj}, \mL_{a,ik} \mU_{b,kj}, \mU_{a,ik} \mL_{b,kj}, \mU_{a,ik} \mU_{b,kj} \} \\
                    & =: (\mL_{ab}')_{ij}
                \end{aligned}
                \label{eq:pf-3} 
            \end{equation}
            and
            \begin{equation}
                \small
                \begin{aligned}
                    (\mA\mB)_{ij} & \le \max\{ \mL_{a,ik} \mL_{b,kj}, \mL_{a,ik} \mU_{b,kj}, \mU_{a,ik} \mL_{b,kj}, \mU_{a,ik} \mU_{b,kj} \} \\
                    & =: (\mU_{ab}')_{ij}.
                \end{aligned}
                \label{eq:pf-4} 
            \end{equation}
            Thus, $op(\gamma(a), \gamma(b)) \subseteq [\mL'_{ab}, \mU'_{ab}]$.
            On the other hand, $\gamma(T_{op}(a,b)) = [\mL'_{ab}, \mU'_{ab}]$ as seen from \Cref{eq:pf-5}.
            Therefore, $op(\gamma(a), \gamma(b)) \subseteq \gamma(T_{op}(a,b))$.
        \end{proof}
    
        \begin{theorem}[Soundness of Abstraction by \Cref{eq:mat-abs-2} for \CodeIn{MatMul}]
            Suppose $op$ is the \CodeIn{MatMul} operator and $T'_{op}$ is as defined in \Cref{eq:mat-abs-2,eq:mat-abs-2-defs}, then $op(\gamma(a), \gamma(b)) \subseteq \gamma(T'_{op}(a,b))$.
            \label{thm:3}
        \end{theorem}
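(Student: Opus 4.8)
The plan is to establish soundness directly, showing that every concrete product lands inside the fast-mode box, and to reuse the sign-case bookkeeping already implicit in \Cref{eq:mat-abs-2-defs}. First I would fix arbitrary matrices $\mA \in \gamma(a)$ and $\mB \in \gamma(b)$ and, exactly as in the proof of \Cref{thm:2}, reduce to the element level: writing $\mL_a, \mU_a, \mL_b, \mU_b$ for the elementwise bounds of $\gamma(a), \gamma(b)$ (per \Cref{eq:gamma}) and absorbing the partition multiplicities into the weights $\vv_k$, it suffices to show that for each output position $(i,j)$ the sum $\sum_k \vv_k\, \mA_{ik}\mB_{kj}$ lies between $(\mL_c')_{ij}$ and $(\mU_c')_{ij}$. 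Since the right-hand sides are themselves sums over $k$, it is enough to prove the bound termwise, i.e. that each $\mA_{ik}\mB_{kj}$ is bounded below (resp. above) by the $k$-th contribution to $\mL_c'$ (resp. $\mU_c'$).

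Next I would perform a nine-way sign-case analysis on the pair of input intervals $[\mL_{a,ik},\mU_{a,ik}]$ and $[\mL_{b,kj},\mU_{b,kj}]$, according to whether each interval is entirely negative, straddles zero, or is entirely positive. The indicators in \Cref{eq:mat-abs-2-defs} are precisely a partition of each element into these three sign classes, so for a fixed position exactly one of $\mL_a^{'-},\mL_a^{'0},\mL_a^{'+}$ is nonzero (and likewise for $\mB$), and that nonzero entry equals the original bound. In the eight cases where at least one interval has a definite sign, the extremum of the bilinear monomial $\mA_{ik}\mB_{kj}$ over the input box is attained at a corner, and one checks that the single surviving term among the ten terms of \Cref{eq:mat-abs-2} is exactly that corner product, giving a tight, hence certainly sound, termwise bound.

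The delicate case, and the one I expect to be the main obstacle, is the remaining one where both intervals straddle zero. Here the true minimum of $\mA_{ik}\mB_{kj}$ is $\min\{\mL_{a,ik}\mU_{b,kj},\, \mU_{a,ik}\mL_{b,kj}\}$, whereas \Cref{eq:mat-abs-2} contributes the \emph{sum} $\mU_a^{'0}\vv\mL_b^{'0} + \mL_a^{'0}\vv\mU_b^{'0}$, equal to $\mU_{a,ik}\mL_{b,kj} + \mL_{a,ik}\mU_{b,kj}$ at this entry. Soundness is recovered from a sign observation: when both intervals straddle zero, both corner products $\mU_{a,ik}\mL_{b,kj}$ and $\mL_{a,ik}\mU_{b,kj}$ are nonpositive, so their sum is no larger than either one individually, and hence no larger than the true minimum; symmetrically, the two upper-bound terms $\mL_{a,ik}\mL_{b,kj}$ and $\mU_{a,ik}\mU_{b,kj}$ are nonnegative, so their sum dominates the true maximum. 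This is exactly where tightness is sacrificed for the efficiency of the tensorized form, and the argument must track signs rather than appeal to corner-extremum optimality.

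Finally I would reassemble the bound: summing the verified termwise inequalities over $k$ (with weights $\vv_k$) and using that the sign-indicator decomposition is exhaustive and mutually exclusive reproduces precisely the matrix expressions of \Cref{eq:mat-abs-2}, yielding $\mL_c' \le \mA\mB \le \mU_c'$ elementwise and therefore $op(\gamma(a),\gamma(b)) \subseteq \gamma(T'_{op}(a,b))$. As an alternative packaging that avoids re-deriving the corner cases, one could instead prove the containment $\gamma(T_{op}(a,b)) \subseteq \gamma(T'_{op}(a,b))$ term-by-term, since the fast-mode bounds coincide with the tight bounds of \Cref{eq:mat-abs-1} except in the both-straddle-zero case, where they are looser in the correct direction, and then invoke \Cref{thm:2}; the sign analysis of the preceding paragraph is the crux either way.
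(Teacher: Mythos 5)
Your proposal is correct and follows essentially the same route as the paper: the paper's proof likewise reduces to the elementwise corner bounds $[\mL'_{ab}, \mU'_{ab}]$ established in the proof of \Cref{thm:2} and then shows by sign-case enumeration of \Cref{eq:mat-abs-2-defs} that these are contained in the fast-mode box --- exactly the ``alternative packaging'' you describe at the end. Your explicit treatment of the both-straddle-zero case (sum of two nonpositive corner products underestimates their minimum, and dually for the upper bound) is precisely the step the paper compresses into ``by enumerating all cases,'' so you have correctly identified and resolved the crux.
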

        
        \begin{proof}
            We use $\mL_a$ and $\mU_a$ to denote the element-wise interval lower and upper bounds of $\gamma(a)$; and use $\mL_b$ and $\mU_b$ to denote these bounds of $\gamma(b)$, respectively, where a formal definition is in \Cref{eq:gamma}.
            We define $\mL'_{ab}$ and $\mU'_{ab}$ by \Cref{eq:pf-3,eq:pf-4}.
            From the proof of \Cref{thm:2}, we have $op(\gamma(a), \gamma(b)) \subseteq [\mL'_{ab}, \mU'_{ab}]$.
            We now only need to show that $[\mL'_{ab}, \mU'_{ab}] \subseteq \gamma(T'_{op}(a,b))$.
            
            $\gamma(T'_{op}(a,b))$ imposes independent interval abstractions element-wise, therefore, we study each element independently.
            For the element $(i,j)$, from \Cref{eq:mat-abs-2,eq:mat-abs-2-defs}, the interval lower bound of $\gamma(T'_{op}(a,b))$, namely $(\mL_c)_{ij}$, satisfies
            \begin{equation}
                \small
                \begin{aligned}
                    & (\mL_c)_{ij} =  \sum_{k=1}^l  
                    (\mU_a^-)_{ik} (\mU_b^-)_{kj} +
                    (\mU_a^{0})_{ik} (\mL_b^{-})_{kj} + 
                    (\mU_a^{+})_{ik} (\mL_b^{-})_{kj} + \\
                    & (\mL_a^{-})_{ik} (\mU_b^{0})_{kj} +
                    (\mU_a^{0})_{ik} (\mL_b^{0})_{kj} +
                    (\mL_a^{0})_{ik} (\mU_b^{0})_{kj} +
                    (\mU_a^{+})_{ik} (\mL_b^{0})_{kj} + \\
                    & (\mL_a^{-})_{ik} (\mU_b^{+})_{kj} + 
                    (\mL_a^{0})_{ik} (\mU_b^{+})_{kj} +
                    (\mL_a^{+})_{ik} (\mL_b^{+})_{kj}.
                \end{aligned}
                \label{eq:pf-6}
            \end{equation}
            By \Cref{eq:mat-abs-2-defs},
            \begin{itemize}
                \item when $\mL_{a,ik} \le \mU_{a,ik} < 0$, \\
                $(\mL_a^-)_{ik} = \mL_{a,ik}, (\mU_a^-)_{ik} = \mU_{a,ik}, 
                (\mL_a^0)_{ik} = 0, \\ (\mU_a^0)_{ik} = 0, 
                (\mL_a^+)_{ik} = 0, (\mU_a^+)_{ik} = 0$;
                
                \item when $\mL_{a,ik} \le 0 \le \mU_{a,ik}$, \\
                $(\mL_a^-)_{ik} = 0, (\mU_a^-)_{ik} = 0, 
                (\mL_a^0)_{ik} = \mL_{a,ik}, \\ (\mU_a^0)_{ik} = \mU_{a,ik}, 
                (\mL_a^+)_{ik} = 0, (\mU_a^+)_{ik} = 0$;
                
                \item when $0 < \mL_{a,ik} \le \mU_{a,ik}$, \\
                $(\mL_a^-)_{ik} = 0, (\mU_a^-)_{ik} = 0, 
                (\mL_a^0)_{ik} = 0, \\ (\mU_a^0)_{ik} = 0, 
                (\mL_a^+)_{ik} = \mL_{a,ik}, (\mU_a^+)_{ik} = \mU_{a,ik}$.
            \end{itemize}
            Similarly for $(\mL_b^-)_{kj}$, $(\mU_b^-)_{kj}$, $(\mL_b^0)_{kj}$, $(\mU_b^0)_{kj}$, $(\mL_b^+)_{kj}$ and $(\mU_b^+)_{kj}$.
            Thus, by enumerating all cases, we have
            \begin{equation}
                \small
                \begin{aligned}
                \text{\Cref{eq:pf-6}} \le \min\{ & \mL_{a,ik} \mL_{b,kj},\mL_{a,ik} \mU_{b,kj},  \\
                & \mU_{a,ik} \mL_{b,kj}, \mU_{a,ik} \mU_{b,kj} \} = (\mL'_{ab})_{ij}.
                \end{aligned}
            \end{equation}
            Similarly, the interval lower bound of $\gamma(T'_{op}(a,b))$, namely $(\mU_c)_{ij}$,
            \begin{equation}
                \small
                \begin{aligned}
                (\mU_c)_{ij} \ge \max\{ & \mL_{a,ik} \mL_{b,kj}, \mL_{a,ik} \mU_{b,kj}, \\
                & \mU_{a,ik} \mL_{b,kj}, \mU_{a,ik} \mU_{b,kj} \} = (\mU'_{ab})_{ij}.
                \end{aligned}
            \end{equation}
            Thus, $[\mL'_{ab}, \mU'_{ab}] \subseteq \gamma(T'_{op}(a,b))$.
        \end{proof}
        
        \paragraph{\CodeIn{Softmax}}
        
        \begin{theorem}[Tightness of Abstraction for \CodeIn{Softmax}]
            Suppose $op: \sR^n \to \sR^n$ is the \CodeIn{Softmax} operator and $T_{op}$ is as defined in \Cref{eq:softmax-abs}, if $op(\gamma(\langle \vl, \vu, (S)\rangle)) \subseteq [\vl^r, \vu^r]$, then $\gamma(T_{op}(\langle \vl, \vu, (S)\rangle)) \subseteq [\vl^r, \vu^r]$.
            \label{thm:4}
        \end{theorem}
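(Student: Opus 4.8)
The plan is to show that the two endpoints $\vl^o_i$ and $\vu^o_i$ computed in \Cref{eq:softmax-abs} are \emph{exactly} the coordinate-wise minimum and maximum of the softmax output over the concretized input box, and then deduce tightness as a one-line corollary. Indeed, once I know that $\vl^o_i = \min_{\vx} op(\vx)_p$ and $\vu^o_i = \max_{\vx} op(\vx)_p$ for a coordinate $p$ lying in subblock $i$, any interval $[\vl^r, \vu^r]$ with $op(\gamma(\langle \vl, \vu, (S)\rangle)) \subseteq [\vl^r, \vu^r]$ must satisfy $\vl^r_i \le \vl^o_i$ and $\vu^r_i \ge \vu^o_i$, whence $\gamma(T_{op}(\langle \vl, \vu, (S)\rangle)) = [\vl^o, \vu^o] \subseteq [\vl^r, \vu^r]$, which is precisely the assertion of \Cref{thm:4}.

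First I would fix an output coordinate $p$ and let $i$ be the subblock containing it, so the concretization lets $\vx_p$ range in $[\vl_i, \vu_i]$ and each remaining coordinate $\vx_j$ range independently in $[\vl_{k_j}, \vu_{k_j}]$, where $k_j$ denotes the subblock of $j$. I then establish the monotonicity of $op(\vx)_p = \exp(\vx_p) / \sum_{j=1}^n \exp(\vx_j)$ in each argument. Writing it as $\exp(\vx_p) / (\exp(\vx_p) + C)$ with $C = \sum_{j \ne p} \exp(\vx_j) > 0$ shows it is strictly increasing in $\vx_p$; writing it as $A / (A + \exp(\vx_j) + D)$ for $j \ne p$ with $A = \exp(\vx_p) > 0$ shows it is strictly decreasing in every other $\vx_j$. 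Hence the maximizer sets $\vx_p = \vu_i$ and all other coordinates to their lower bounds, and the minimizer does the reverse; these assignments are feasible because each coordinate is chosen independently within its subblock interval, even coordinates that share subblock $i$ with $p$.

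Evaluating the maximizer, the denominator becomes $\exp(\vu_i) + \sum_{j \ne p} \exp(\vl_{k_j})$, and the key bookkeeping step is the identity $\sum_{j=1}^n \exp(\vl_{k_j}) = \sum_{k} \vv_k \exp(\vl_k) = \exp(\vl)^\T \vv$, since subblock $k$ contributes $\vv_k = S[k]-S[k-1]$ equal terms. Removing the single $j = p$ contribution $\exp(\vl_i)$ yields the denominator $\exp(\vl)^\T \vv - \exp(\vl_i) + \exp(\vu_i)$, so $\max_{\vx} op(\vx)_p = \vu^o_i$ exactly as in \Cref{eq:softmax-abs}; the minimizer gives $\vl^o_i$ by the symmetric computation. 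Since these extremal values depend only on the subblock $i$, they are constant across each subblock, so $T_{op}(\langle \vl, \vu, (S)\rangle)$ is legitimately partitioned by $(S)$ and $\gamma(T_{op}(\langle \vl, \vu, (S)\rangle))$ is the element-wise box $[\vl^o, \vu^o]$, completing the argument sketched above.

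The main obstacle I anticipate is not the monotonicity but the partition bookkeeping: I must track that the sum over \emph{individual} coordinates $j \ne p$ collapses to the block-weighted sum $\exp(\vl)^\T \vv$ with exactly one single-element correction $\exp(\vl_i)$ for the coordinate $p$ (rather than one $\vv_i$-weighted correction), and I must confirm that placing one coordinate of subblock $i$ at $\vu_i$ while its siblings sit at $\vl_i$ is a genuine member of the concretization. These are exactly the places where the interval-with-tensor-partitioning structure could hide an off-by-$\vv_i$ error or an infeasible extremizer, so they warrant the most care.
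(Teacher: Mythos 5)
Your proposal is correct and follows essentially the same route as the paper's own proof: both exhibit the corner points of the concretization box (the coordinate of interest at one bound, all other coordinates at the opposite bound), evaluate softmax there, and identify the resulting values with $\vl^o_{i}, \vu^o_{i}$ of \Cref{eq:softmax-abs}, so any interval covering the image must contain them. The only difference is that your monotonicity argument (showing these corners are genuine extremizers) is not actually needed for tightness---attainment alone suffices---and the paper reserves exactly that monotonicity reasoning for the soundness result (\Cref{thm:5}).
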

        
        \begin{proof}
            We use $\vl'$ and $\vu'$ to denote the element-wise interval lower and upper bounds of $\gamma(\langle \vl, \vu, (S)\rangle)$.
            Formally, $\vl'_i = \vl_{i'}$ and $\vu'_i = \vu_{i'}$ where $S[i'] \le i - 1 < S[i'+1]$.
            Then, 
            for each $i$, by setting $\vx_i = \vl_i$ and $\vx_j = \vu_j$ for all $j \neq i$, we get
            \begin{equation}
                \small
                \begin{aligned}
                    \vl_i^r & \le \dfrac{\exp(\vl'_i)}{\exp(\vl'_i) + \sum_{k=1, k\neq i}^n \exp(\vu'_k)} \\
                    & = \dfrac{\exp(\vl_{i'})}{\exp(\vl_{i'}) + \sum_{k=1}^n \exp(\vu'_k) - \exp(\vu'_i)} \\
                    & = \dfrac{\exp(\vl_{i'})}{\exp(\vl_{i'}) + \vv^\T \exp(\vu_k) - \exp(\vu_{i'})} = \vl_{i'}^o.
                \end{aligned}
            \end{equation}
            In addition, by setting $\vx_i = \vu_i$ and $\vx_j = \vl_j$ for all $j \neq i$, we get $\vu_i^r \ge \vu_{i'}^o$.
            Here, $\vl^o$ and $\vu^o$ are as defined in \Cref{eq:softmax-abs}.
            Combining these two arguments, we get 
            $$\gamma(T_{op}(\langle \vl, \vu, (S)\rangle)) \subseteq [\vl^r, \vu^r].$$
        \end{proof}
    
        \begin{theorem}[Soundness of Abstraction for \CodeIn{Softmax}]
            Suppose $op: \sR^n \to \sR^n$ is the \CodeIn{Softmax} operator and $T_{op}$ is as defined in \Cref{eq:softmax-abs}, then $op(\gamma(\langle \vl, \vu, (S)\rangle)) \subseteq \gamma(T_{op}(\langle \vl, \vu, (S)\rangle))$.
            \label{thm:5}
        \end{theorem}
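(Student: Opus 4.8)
The plan is to mirror the tightness argument of \Cref{thm:4} but run it in the opposite direction: instead of exhibiting points that attain $\vl^o$ and $\vu^o$, I would fix an \emph{arbitrary} $\vx \in \gamma(\langle \vl, \vu, (S)\rangle)$ and bound each output coordinate $op(\vx)_i$ inside $[\vl^o_{i'}, \vu^o_{i'}]$, where $i'$ denotes the block containing index $i$ (i.e. $S[i'] \le i-1 < S[i'+1]$). As in the proof of \Cref{thm:4}, I first pass to element-wise bounds $\vl'_i = \vl_{i'}$ and $\vu'_i = \vu_{i'}$, so that membership $\vx \in \gamma(\langle \vl, \vu, (S)\rangle)$ becomes the per-coordinate constraint $\vl'_i \le \vx_i \le \vu'_i$. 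Since the output abstraction reuses the same partition $(S)$, it suffices to show $\vl^o_{i'} \le op(\vx)_i \le \vu^o_{i'}$ for every coordinate $i$; because the two bounds depend only on the block index $i'$, all coordinates sharing a block automatically land in one common interval.

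The key structural fact I would invoke is the coordinatewise monotonicity of softmax. Writing $op(\vx)_i = \left(1 + \sum_{k\neq i}\exp(\vx_k - \vx_i)\right)^{-1}$, the map is nondecreasing in $\vx_i$ and nonincreasing in each $\vx_k$ for $k\neq i$. Hence over the box $\prod_j [\vl'_j, \vu'_j]$ the value $op(\vx)_i$ is minimized at the corner $\vx_i = \vl'_i,\ \vx_k = \vu'_k\ (k\neq i)$ and maximized at $\vx_i = \vu'_i,\ \vx_k = \vl'_k\ (k\neq i)$. Monotonicity alone then sandwiches $op(\vx)_i$ for an arbitrary $\vx$ between these two corner values, so no explicit witness is required (in contrast to the tightness proof, which \emph{did} need the witnesses).

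It remains to identify the two corner values with $\vl^o_{i'}$ and $\vu^o_{i'}$. This is the one bookkeeping step worth care: at the minimizing corner the denominator is $\exp(\vl'_i) + \sum_{k\neq i}\exp(\vu'_k)$, and since each block $j$ contributes $\vv_j = S[j]-S[j-1]$ equal entries $\vu_j$, the flat sum collapses as $\sum_{k=1}^n \exp(\vu'_k) = \exp(\vu)^\T \vv$, giving $\sum_{k\neq i}\exp(\vu'_k) = \exp(\vu)^\T\vv - \exp(\vu_{i'})$ and therefore exactly the $\vl^o_{i'}$ of \Cref{eq:softmax-abs}. The symmetric computation at the maximizing corner reproduces $\vu^o_{i'}$. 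Combining, $op(\vx)_i \in [\vl^o_{i'}, \vu^o_{i'}]$ for every $i$, so $op(\vx) \in \gamma(T_{op}(\langle \vl, \vu, (S)\rangle))$, which is the claim.

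The main obstacle I anticipate is not analytic but notational: keeping straight the passage between the partitioned triple and its element-wise expansion—in particular that the leave-one-out term subtracts a single copy $\exp(\vu_{i'})$ rather than an entire block, and that the weighted inner product $\exp(\vu)^\T\vv$ faithfully reconstitutes the flat sum over all $n$ coordinates. Once that correspondence is nailed down, the rest is elementary monotonicity and the result follows immediately.
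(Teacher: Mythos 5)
Your proposal is correct and takes essentially the same approach as the paper's own proof: both rest on the coordinatewise monotonicity of softmax, bounding each output coordinate by its values at the two box corners and identifying those corner values with $\vl^o_{i'}$ and $\vu^o_{i'}$ of \Cref{eq:softmax-abs}. If anything, your writeup is the more careful one --- the paper's proof loosely asserts $\partial\, op(\vx)_i / \partial \vx_j > 0$ for all $j$, whereas softmax is decreasing in the off-diagonal coordinates, and it is precisely this mixed monotonicity (which you state correctly, along with the block-to-elementwise bookkeeping showing a single copy $\exp(\vu_{i'})$ is subtracted) that makes the minimizing corner $\vx_i = \vl'_i$, $\vx_k = \vu'_k$ $(k \neq i)$ and matches the formulas.
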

        
        \begin{proof}
            Leveraging the fact that the softmax function is monotonically increasing, i.e., $\frac{\d op(\vx)_i}{\d \vx_j} > 0$, we have
            $\gamma(\langle \vl, \vu, (S)\rangle)_i \in [\vl^o_{i'}, \vu^o_{i'}]$,
            where $S[i'] \le i-1 < S[i'+1]$.
            Since $[\vl^o_{i'}, \vu^o_{i'}] = \gamma(T_{op}(\langle \vl, \vu, (S)\rangle))_i$ by our definition of $T_{op}$, $op(\gamma(\langle \vl, \vu, (S)\rangle)) \subseteq \gamma(T_{op}(\langle \vl, \vu, (S)\rangle))$ follows.
        \end{proof}

\balance

\subsection{Implementation}

    \label{sec:impl}
    
    We have implemented a tool for \sysname in roughly 10k lines of Python code based on PyTorch.
    Our tool leverages existing modules~(\CodeIn{tf2onnx} for PyTorch and \CodeIn{torch.onnx.export} for Tensorflow and Keras) to extract ONNX-format DNN architectures from DL programs to take as the input, and automatically generates detection results, system tests, and defect fixes~(given imposing locations).
    

\subsection{Hyperparameters}

    \label{adxsec:detail-hyperparam}

    In this section, we listed hyperparameters in our two-step system test generation technique:
    %
    For the unit test generation, we use the Adam optimizer where the learning rate is $1$ and the maximum iteration number is $100$.
    For the training example generation, we target for training example under learning rate $\gamma=1$ and the approach has similar performance under other learning rates.
    We follow the convention in the DLG attack, where we use the L-BFGS method as the optimizer for gradient-based minimization.
    We terminate the method and return ``\CodeIn{failed}'' if either the running time exceeds \SI{1800}{s}~(universal execution time limit for all experimental evaluations), or a failure-exhibiting example training is not found after $300$ iterations of L-BFGS optimization.
    
    
    
\subsection{Ablation Study of System Test Generation}

    \label{adxsec:system-testgen-by-stage}
    
    \sysname uses the two-step test generation technique to
    produce failure-exhibiting system tests:
    it first generates failure-exhibiting unit tests with gradient back-propagation, then generates failure-exhibiting training examples via the extended DLG attack.
    To isolate the impact of \sysname at each step, we replace either step with random sampling:
    ``Random + \sysname'', which first generates failure-exhibiting unit tests via random sampling, then generates training example via \sysname;
    ``\sysname + Random'', which first generates failure-exhibiting unit tests via \sysname, then generates training example via random sampling.
    We follow the same evaluation protocol as in \Cref{subsec:system-test-gen-exp} in this ablation study.
    We find that ``\sysname + Random'' takes 9.38X running time than \sysname and fails for 68 runs~(\sysname only fails for 57 runs); and ``Random + \sysname'' fails for 113 runs~(roughly 2X failed runs compared to \sysname).
    This study implies that \sysname's technique helps to improve effectiveness and efficiency at both steps of failure-exhibiting system test generation compared to the pure random baseline.
    The improvement for the first step is mainly from the effectiveness perspective, and the improvement for the second step is mainly from the efficiency perspective.
    




\subsection{Threats to Validity}

    An external threat to validity is the evaluation subjects, which are the GRIST benchmarks~\cite{yan2021exposing}, used to  evaluate \sysname and baseline approaches. 
    Although the subjects  contain 63 real-world DL programs and are the largest to our knowledge, they still may not be representative enough.
    Another external threat is the approach randomness.
    To mitigate this threat, we repeat all randomized approaches for 10 runs.
    To mitigate bias in the empirical study, two authors independently conduct the study and discuss all cases to reach a consensus. A major internal threat to validity comes from the approach implementation.
    We reduce this threat by conducting code review  and log checking  extensively.
    An author independently checks the correctness of code implementation.
    We also verify the soundness of our static analysis framework with over 50 carefully designed unit tests.

\end{document}